\title{Mathematical Execution: \\  A Unified Approach for Testing Numerical Code}
\newcommand{\removelatexerror}{\let\@latex@error\@gobble}
\newcommand{\vs}{\hbox{\emph{vs.}}\xspace}
\newcommand{\etal}{\hbox{\emph{et al.}}\xspace}
\newcommand{\eg}{\hbox{\emph{e.g.}}\xspace}
\newcommand{\ie}{\hbox{\emph{i.e.}}\xspace}
\newcommand{\st}{\hbox{\emph{s.t.}}\xspace}
\newcommand{\etc}{\hbox{\emph{etc.}}\xspace}
\newcommand{\aka}{\hbox{\emph{a.k.a.}}\xspace} 
\newcommand{\mydef}{\overset{\text{def}}{=}}
\newtheorem{thm}{Theorem}[section]
\newtheorem{lem}[thm]{Lemma}
\newtheorem{cor}[thm]{Corollary}
\theoremstyle{definition}
\newtheorem{definition}[thm]{Definition}
\newtheorem{example}[thm]{Example}
 \newtheorem{remark}[thm]{Remark}
\newcommand{\Real}{\mathbbm{R}}
\newcommand{\energy}{f}
\definecolor{light-gray}{gray}{0.8}
\newcommand{\dom}[1]{\operatorname{dom}(#1)\xspace}
\newcommand{\FOO}{{\texttt {FOO}}\xspace}
\newcommand{\FOOI}{{\texttt {FOO\_I}}\xspace}
\newcommand{\FOOR}{{\texttt {FOO\_R}}\xspace}
\newcommand{\loader}{{\texttt {loader}}\xspace}
\newcommand{\librso}{{\texttt {libr.so}}\xspace}
\newcommand{\mcmc}{{\texttt {MCMC}}\xspace}
\newcommand{\pen}{{\mathnormal{pen}}\xspace}
 \newcommand{\myr}{\texttt {r}\xspace}
\newcommand\sem[1]{\left[\!\left|#1\right|\!\right]}
\newcommand{\about}[1]{\paragraph{}\noindent{{\underline{\sc  #1}}}}
\renewcommand{\about}[1]{}
\newcommand{\niter}{\mathnormal {n\_iter}\xspace}
\newcommand{\nStart}{\mathnormal {n\_start}\xspace}
\newcommand{\LM}{\texttt {LM}\xspace}
\newcommand{\lmin}{\texttt {LM}\xspace}
\newcommand{\dist}{\mathnormal{d}}
\newcommand{\Integer}{\mathbbm{Z}}
\newcommand{\err}{\mathbbm{E}}
\newcommand{\slug}{\hbox{\kern1.5pt\vrule width2.5pt height6pt depth1.5pt\kern1.5pt}}
\def\xskip{\hskip 7pt plus 3pt minus 4pt}
\newdimen\algindent
\newif\ifitempar \itempartrue 
\def\algindentset#1{\setbox0\hbox{{\bf #1.\kern.25em}}\algindent=\wd0\relax}
\def\algbegin #1 #2{\algindentset{#21}\alg #1 #2} 
\def\aalgbegin #1 #2{\algindentset{#211}\alg #1 #2} 
\def\alg#1(#2). {\medbreak 
  \noindent{\bf#1}({\it#2\/}).\xskip\ignorespaces}
\def\algstep#1.{\ifitempar\smallskip\noindent\else\itempartrue
  \hskip-\parindent\fi
  \hbox to\algindent{\bf\hfil #1.\kern.25em}%
  \hangindent=\algindent\hangafter=1\ignorespaces}
\newcommand{\ME}{\textnormal{ME}\xspace}
\newcommand{\coverme}{{\textnormal{CoverMe}}\xspace}
\newcommand{\fdlibm}{\textnormal{Fdlibm}}
\DeclareMathOperator{\Explored}{Saturate}
\newcommand{\myT}{_{\mathnormal T}}
\newcommand{\myF}{_{\mathnormal F}}
\newcommand{\mylhs}{\mathnormal{a}}
\newcommand{\myrhs}{\mathnormal{b}}
\newcommand{\myop}{\mathnormal{op}}
\newcommand{\myU}{\mathnormal{U}}
\newcommand{\myM}{\mathnormal{M}}
\newcommand{\myX}{\mathnormal{X}}
\newcommand{\repf}{\mathnormal{R}}
\newcommand{\failure}{\Downarrow {\texttt{wrong}}}
\newcommand{\Paragraph}[1]{\paragraph{\upshape #1}}
\newcommand{\xloc}{x_{\textnormal{L}}}
\newcommand{\xpro}{\widetilde{\xloc}}
\begin{document}

\maketitle

%



\begin{abstract}

This paper presents \emph{Mathematical Execution} (ME), a new, unified approach for testing numerical code.  The key idea is to (1) capture the desired testing objective via a
\emph{representing function} and (2) transform the automated testing
problem to the minimization problem of the representing function. The
minimization problem is to be solved via  \emph{mathematical
  optimization}.  The main feature of ME is that it
directs input space exploration by only executing the representing function,
thus avoiding static or symbolic reasoning about the program semantics, which is particularly challenging for numerical code. To illustrate this feature, we
develop an ME-based algorithm for coverage-based testing of numerical code.
We also show the potential of applying and adapting ME
to other related problems, including path reachability testing,
boundary value analysis, and satisfiability checking.

To demonstrate ME's practical benefits, we have implemented \coverme,
a proof-of-concept realization for branch coverage based testing, and
evaluated it on Sun's C math library (used in, for example, Android,
Matlab, Java and JavaScript). We have compared \coverme with random
testing and Austin, a publicly available branch coverage based testing
tool that supports numerical code (Austin combines symbolic execution
and search-based heuristics). Our experimental results show that
\coverme achieves near-optimal and substantially higher coverage
ratios than random testing on all tested programs, across all
evaluated coverage metrics.  Compared with Austin, \coverme improves
branch coverage from 43\% to 91\%, with significantly less time (6.9
\vs 6058.4 seconds on average).
\end{abstract}

\section{Introduction}
\label{sect:intro}

Testing has been a predominant approach for improving software quality.
Manual testing is notoriously tedious~\cite{Brooks:1995:MM:207583};
{automated testing} has been an active research topic,
drawing on a rich body of techniques, such as symbolic
execution~\cite{King:1976:SEP:360248.360252,Boyer:1975:SFS:800027.808445,Clarke:1976:SGT:1313320.1313532,DBLP:journals/cacm/CadarS13},
random
testing~\cite{Bird:1983:AGR:1662311.1662317,DBLP:conf/pldi/GodefroidKS05,AFL:web}
and search-based
strategies~\cite{Korel:1990:AST:101747.101755,McMinn:2004:SST:1077276.1077279,Baars:2011:SST:2190078.2190152,Lakhotia:2010:FSF:1928028.1928039}.

Automated testing is about producing program
failures~\cite{DBLP:journals/computer/Meyer08}. Let $\FOO$ be a
program and $\dom{\FOO}$ be its input domain.  An automated testing
problem is to systematically find $x\in\dom{\FOO}$ such that
$\FOO(x)\failure$, where $\FOO(x)\failure$ denotes ``\FOO goes
wrong if executed on input $x$.''  It is difficult to specify $\FOO(x)\failure$, which is
known as the test oracle problem~\cite{DBLP:journals/cj/Weyuker82,DBLP:conf/kbse/MemonBN03}.
This paper assumes that an algorithm for checking $\FOO(x)\failure$ is given.


An important problem in testing is the testing of numerical
  code, \ie, programs with floating-point arithmetic, non-linear
variable relations, or external function calls (such as logarithmic and
trigonometric functions).  These programs are pervasive in
safety-critical systems, but ensuring their quality remains difficult.
Numerical code presents two
specific challenges for existing automated testing techniques: (1)
Random testing is easy to employ and fast, but ineffective in finding
deep semantic issues and handling large input spaces; and (2) symbolic execution and its
variants can perform systematic path exploration, but suffer from
path explosion and are weak in dealing with complex program logic
involving numerical constraints.

\Paragraph{Our Approach.}

This paper introduces a new, unified approach for automatically
testing numerical code.  It proceeds as follows: We derive from the
program under test \FOO another program \FOOR, called
\emph{representing function}, which represents how far an input
$x\in\dom{\FOO}$ is from reaching the set
$\{x\mid \FOO(x)\failure\}$.  We require that the representing
function returns a non-negative value for all $x$, which diminishes when $x$
gets close to the set and vanishes when $x$ goes inside.  Intuitively,
this representing function is similar to a sort of distance. It allows
to approach the automated testing problem, \ie, the problem of
finding an element in $\{x\mid \FOO(x)\failure\}$, as the problem of
minimizing \FOOR. This approach can be justified with a strong
guarantee:
\begin{align}
\FOO(x) \failure  ~\Leftrightarrow~ x \text{ minimizes } \FOOR,
\label{eq:intro:me}
\end{align}
assuming that there exists at least one $x$ such that
$\FOO(x)\failure$ (details in Sect.~\ref{sect:theory}).  Therefore,
the essence of our approach is to transform the automated testing
problem to a minimization problem. Minimization
problems are well studied in the field of Mathematical Optimization
(MO)~\cite{Minoux86}.  MO works by executing its objective function
only (see Sect.~\ref{sect:background}). That is to say, our approach
does not need to analyze the semantics of the tested programs.
Instead, it directs input space exploration by only executing the
representing function. We call this approach \emph{Mathematical
  Execution} (abbreviated as $\ME$).



Note that mathematical optimization by itself does not
necessarily provide a panacea for automated testing because many MO
problems are themselves intractable. However, efficient algorithms
have been successfully applied to difficult mathematical optimization
problems.  A classic example is the NP-hard {traveling salesman
  problem}, which has been nicely handled by simulated annealing~\cite{Kirkpatrick83optimizationby}, a stochastic MO technique.
Another example is \emph{Monte Carlo Markov Chain}~\cite{DBLP:dblp_journals/ml/AndrieuFDJ03}, which has been effectively 
adapted to testing and
verification~\cite{Schkufza:2014:SOF:2594291.2594302,DBLP:conf/oopsla/FuBS15,heule2015mimic,xsat}.
A major finding of this work is that using mathematical
optimization for testing numerical code 
is a powerful approach. If we carefully design the
representing function so that certain conditions are respected, we
can come up with mathematical optimization problems that can be
efficiently solved by off-the-shelf MO tools.


To demonstrate the feasibility of our approach, we have applied $\ME$  on 
\emph{coverage-based testing}~\cite{DBLP:dblp_books/daglib/0012071} of floating-point code, a fundamental problem in testing. The experimental results show that our
implemented tool, \coverme, is highly effective.  Fig.~\ref{fig:intro:fdlibm}
gives a small program from our benchmark suite
$\fdlibm$~\cite{fdlibm:web}. The
program operates on two {\tt double} input parameters. It first takes {\tt
  |x|}'s high word by bit twiddling, including a bitwise {\tt AND}
($\&$), a pointer reference (\&) and a dereference (*) operator. The bit
twiddling result is stored in integer variable {\tt ix} (Line 3),
followed by four conditional statements that examine {\tt ix} (Lines
4--15).  The tool  \coverme yields:

\begin{center}
                      \begin{tabular}{ll}
                        $100\%$ & line coverage\\
                        $87.5\%$  & branch coverage
                      \end{tabular}
\end{center}

When investigating why \coverme fails to  achieve full
branch coverage, we find that one out of the eight branches in the program cannot be reached. The condition {\tt
if ((int) x) == 0} (Line 5) always holds because it is nested
within the $|x|<2^{-27}$ branch (Line 4).~\footnote{Sun's developers decided to
use this redundant check to trigger the \emph{inexact exception} of
floating-point as a side effect. From the program semantics
perspective, no input of {\tt \_\_kernel\_cos} can trigger the
false branch of {\tt if (((int) x) == 0)}.}  Therefore, the $87.5\%$ branch coverage
is, in fact, optimal. We have compared \coverme with Austin~\cite{lakhotia2013austin},
a publicly available, state-of-the-art coverage-based testing tool that can handle floating-point code.  Austin achieves 37.5\% branch coverage in 1885.1 seconds, whereas \coverme achieves the optimal coverage in 15.4 seconds (see Sect.~\ref{sect:eval}).


\begin{figure}
\lstset{xleftmargin=0.5cm, numbers=left}
\begin{lstlisting}
#define __HI(x) *(1+(int*)&x)
double __kernel_cos(double x, double y){
  ix = __HI(x)&0x7fffffff;  /* ix = |x|'s high word */
  if(ix<0x3e400000) {       /* if |x| < 2**(-27) */
     if(((int)x)==0) return ...; /* generate inexact */
  }
  ...;
  if(ix < 0x3FD33333) 	  /* if |x| < 0.3 */ 
    return ...;
  else {
    if(ix > 0x3fe90000) { /* if |x| > 0.78125 */
      ...;
    } else {
      ...;
    }
    return ...;
  }
}
\end{lstlisting}
\caption{The benchmark program {\tt \_\_kernel\_cos}  taken from the $\fdlibm$~\cite{fdlibm:web} library (\url{http://www.netlib.org/fdlibm/k\_cos.c}).} 
\label{fig:intro:fdlibm}
\end{figure}



 




\Paragraph{Contributions.} 
 Our contributions follow:
\begin{itemize}[nosep]
\item We introduce Mathematical Execution, a new general approach for
  testing numerical code;
\item We develop an effective coverage-based testing algorithm using
  the $\ME$ approach;
\item We demonstrate that \ME is a unified
  approach by showing how to
  apply $\ME$ to several important testing problems; and
\item We implement the coverage-based testing tool \coverme and show
  its effectiveness on real-world numerical library code.
\end{itemize}

\Paragraph{Paper Outline.}
Sect.~\ref{sect:background} gives the background on mathematical
optimization.  Sect.~\ref{sect:overview} illustrates $\ME$ by studying
the case of branch coverage based testing.  We define the problem,
demonstrate the $\ME$ solution, and give the algorithmic details.
Sect.~\ref{sect:theory} lays out the theoretical foundation for $\ME$ and
demonstrates $\ME$ with several additional
examples. Sect.~\ref{sect:eval} presents an implementation overview of $\coverme$
and describes our experimental results. Sect.~\ref{sect:discussion}
discusses the current limitations of $\ME$.
Finally, Sect.~\ref{sect:relwork} surveys related work and
Sect.~\ref{sect:conc} concludes.  For completeness,
Appendix~\ref{sect:untested} lists the benchmark programs in Fdlibm
that $\coverme$ does not support and their reasons, and 
Appendix~\ref{sect:implem} gives implementation details.

\Paragraph{Notation.}
The sets of real and integer numbers are denoted by $\Real$ and  ${\Integer}$ respectively.
For two real numbers $a$ and $b$, the usage a{\sc E}b  means  $a * 10^b$.  
In this presentation, we do not distinguish a mathematical expression,
such as $x^2 + |y|$, and its implementation, such as \lstinline!x*x + abs(y)!. Similarly,
 we use a
lambda expression to mean either  a mathematical function or its
implementation. For example, an implementation $\lambda
x. x^2$ may refer to  the code  \lstinline!double f (double x) {return x*x;}!.
 We use the C-like syntax $A?~ v_1: v_2$ to mean an implementation that
  returns $v_1$ if $A$ holds, or $v_2$ otherwise.

\section{Background}
\label{sect:background}
We begin with some preliminaries on mathematical optimization
following the exposition of~\cite{xsat}. A complete treatment of
either is beyond the scope of this
paper. See \cite{DBLP:dblp_journals/ml/AndrieuFDJ03,Zoutendijk76,Minoux86}
for more details.

 A Mathematical Optimization (MO) problem is usually  formulated as:
\begin{equation} \label{eq:background:1}
 \begin{aligned}
 & {\text{minimize}}
 & & f(x) \\
 & \text{subject to}
 & & x\in S
 \end{aligned}
 \end{equation}
 where $f$ is called the {objective function}, and $S$ the {search
   space}.
 In general, mathematical optimization problems can be divided into
 two categories. One focuses on how functions are shaped at local
 regions and where a local minimum can be found near a given
 input. This \emph{local optimization} is classic, usually involving
 standard techniques such as Newton's or the steepest descent methods.
 Local optimization not only provides the minimum value of a function
 within a neighborhood of the given input points, but also aids
 \emph{global optimization}, which determines the function minimum
 over the entire search space.

\Paragraph{Local Optimization.}

 Let $f$ be a function defined over a Euclidean space with distance
 function $d$.  We call $x^*$ a \emph{local minimum point} if there exists
 a \emph{neighborhood} of $x^*$, namely $\{x \mid d(x,x^*) < \delta\}$
 for some $\delta>0$, so that all $x$ in the neighborhood satisfy
 $f(x)\geq f(x^*)$.  The value $f(x^*)$ is called a \emph{local
   minimum} of $f$.

 Local optimization problems can usually be efficiently solved if the
 objective function is smooth (such as continuous or differentiable to
 some degree)~\cite{citeulike:2621649}.
 Fig.~\ref{fig:background:1}(a) shows a common local optimization
 method with the objective function $\lambda x. x\leq 1~?~0:(x-1)^2$.
 It uses tangents of the curve to quickly converge to a minimum
 point. The smoothness of the curve makes it possible to deduce the function's
 behavior in the neighborhood of a particular point $x$ by using
 information at $x$ only.

\Paragraph{Global Optimization and MCMC.}
If $f(x^*)\leq f(x)$ for all $x$ in the search space, we call $x^*$ a
\emph{global minimum point} (or minimum point for short), and 
 $f(x^*)$ the \emph{global minimum} (or minimum for short) of the function $f$. In this presentation, if we say ``$x^*$ minimizes the function $f$'',  we mean $x^*$ is a global minimum point of $f$. 



 \begin{figure}[t]
\begin{subfigure}[b]{0.45\linewidth}
   \centering
  \includegraphics[width=\linewidth]{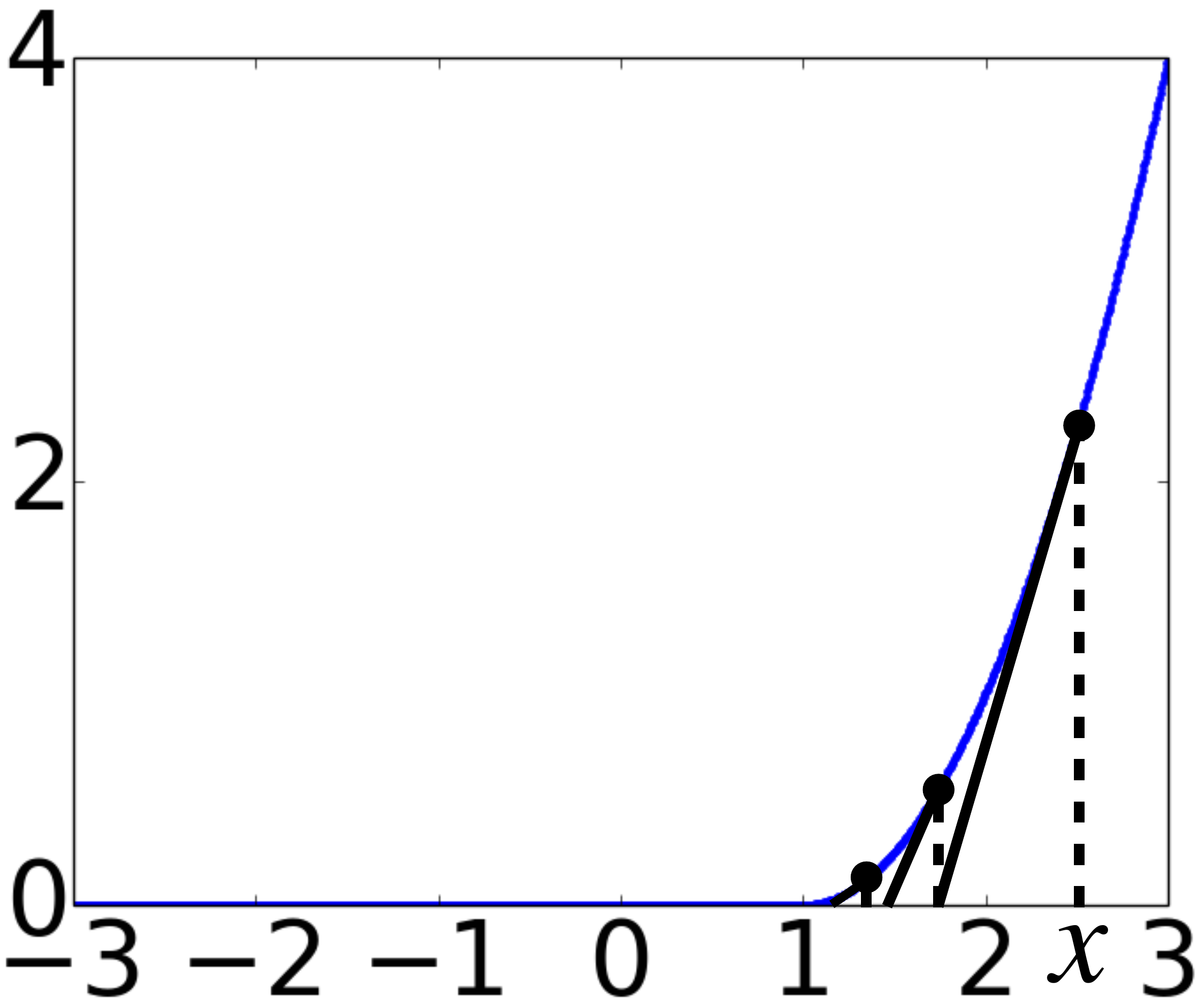}
  \caption{}
\end{subfigure}%
\hfill
\begin{subfigure}[b]{0.46\linewidth}
   \centering
  \includegraphics[width=\linewidth]{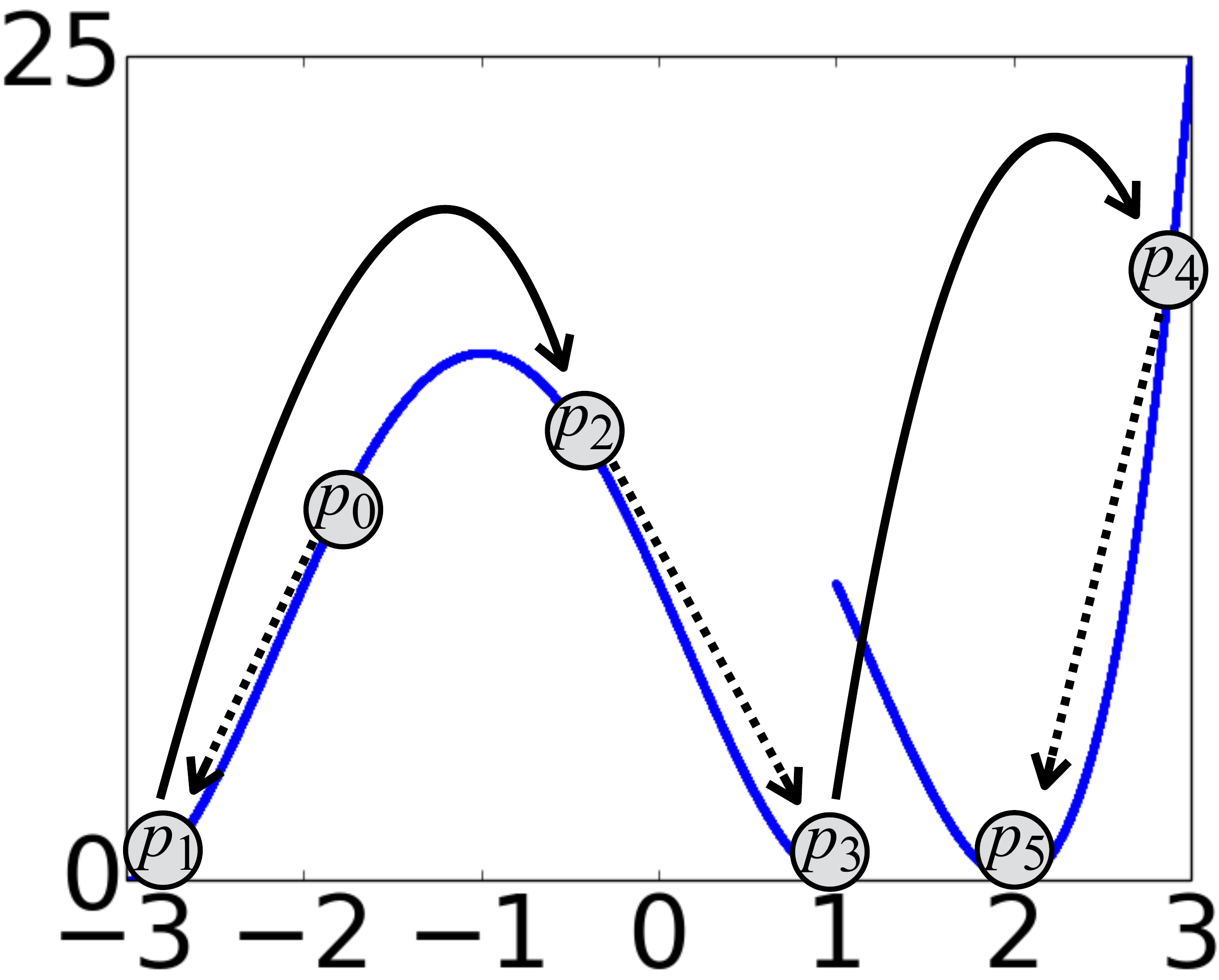}
  \caption{}
\end{subfigure}
\caption{ (a) Local optimization with the curve of
  $\lambda x. x\leq 1~?~0:(x-1)^2$.  The illustrated technique uses
  tangents of the curve to converge quickly  to a minimum point; (b)
  Global optimization with the curve of
  $\lambda x.x \leq 1~?~((x+1)^2-4)^2:(x^2-4)^2$.  The MCMC method starts
  from $p_0$, converges to local minimum $p_1$, performs a Monte-Carlo
  move to $p_2$ and converges to $p_3$. Then it moves to $p_4$ and
  finally converges to $p_5$.  }
\label{fig:background:1}
 \end{figure}


 We use the Monte Carlo Markov Chain (MCMC)
 sampling to solve global optimization
 problems.  
A fundamental fact regarding MCMC is
 that it follows the target distribution asymptotically.  For
 simplicity, we give the results~\cite{DBLP:dblp_journals/ml/AndrieuFDJ03}  with the discrete-valued probability.
\begin{lem}
  Let $x$ be a random variable, $A$ be an  enumerable set of the
  possible values of $x$. Let $f$ be a target probability  distribution for $x$, \ie,  the probability of $x$ taking value  $a$ is  $f(a)$.  Then, for an MCMC sampling sequence
  $x_1,\ldots,x_n\ldots$ and a probability density function
  $P(x_n=a)$ for each $x_n$, we have $P(x_n =a) \rightarrow f(a)$.
\label{lem:pdf}
\end{lem}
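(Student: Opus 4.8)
The plan is to recognize the sampling sequence $x_1,\ldots,x_n,\ldots$ as a time-homogeneous Markov chain on the state space $A$, to identify $f$ as its unique stationary distribution, and then to invoke the classical convergence (ergodic) theorem for Markov chains. First I would make the transition kernel explicit. An MCMC sampler such as Metropolis--Hastings moves from the current state $a$ to a proposed state $b$ drawn from a proposal distribution $q(b\mid a)$, accepts the move with probability $\min\{1,\, f(b)q(a\mid b)/(f(a)q(b\mid a))\}$, and otherwise stays at $a$. Writing $K(a,b)$ for the resulting one-step transition probabilities, a direct computation shows that $K$ satisfies detailed balance with respect to $f$, i.e.\ $f(a)\,K(a,b)=f(b)\,K(b,a)$ for all $a,b\in A$. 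Summing this identity over $a$ and using $\sum_{b}K(a,b)=1$ gives $\sum_{a}f(a)\,K(a,b)=f(b)$, so $f$ is a stationary distribution of the chain.

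Next I would check the two regularity conditions that promote stationarity to convergence of the marginals: irreducibility and aperiodicity. Irreducibility holds because $q$ is chosen with full support on $A$, so every state reaches every other with positive probability in finitely many steps; aperiodicity holds because a proposed move is rejected with positive probability, hence $K(a,a)>0$. Since $f$ is a genuine probability distribution on $A$ (it sums to one), an irreducible chain admitting a stationary probability distribution is automatically positive recurrent, and that stationary distribution is then unique and equal to $f$.

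Finally, with irreducibility, aperiodicity, and positive recurrence established, the Markov chain convergence theorem yields $\sum_{a\in A}\lvert P(x_n=a)-f(a)\rvert\to 0$ as $n\to\infty$, for any distribution of the initial state $x_1$; in particular $P(x_n=a)\to f(a)$ for every $a$, which is the claim. The main obstacle is the bookkeeping forced by allowing $A$ to be countably infinite: one must argue positive recurrence rather than read it off from finiteness as in the finite-state case, and one must handle states with $f(a)=0$ (on which the acceptance ratio is ill-defined) by noting they are unreachable or transient and may be discarded. These points are not deep, but they are exactly where the \emph{discrete-valued} hypothesis of the lemma does its work, and making them precise is the bulk of the proof.
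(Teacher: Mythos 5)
The paper does not prove this lemma at all: it is stated as a known background fact and attributed to the MCMC literature (Andrieu et al.), so there is no in-paper argument to compare against. Your proof is the standard one that the cited sources give --- detailed balance of the Metropolis--Hastings kernel establishes stationarity of $f$, irreducibility plus a stationary probability distribution gives positive recurrence and uniqueness, and the ergodic theorem for aperiodic chains yields convergence of the marginals in total variation --- and it is essentially correct, including your handling of the countable state space and of states with $f(a)=0$. The one point worth making explicit is that irreducibility and aperiodicity are \emph{additional hypotheses on the proposal distribution} (full support of $q$, positive rejection probability) rather than consequences of the lemma's stated assumptions; as written the lemma is false without them (e.g., a reducible proposal), so your proof is really a proof of the correctly qualified statement that the paper is informally invoking.
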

 
For example, consider the target distribution of coin tossing with
$0.5$ probability for having the head. An MCMC sampling is a sequence
of random variables $x_1$,\ldots, $x_n,\ldots$, such that the probability of
$x_n$ being ``head'', denoted by $P_n$, converges to $0.5$.


MCMC provides multiple advantages in practice.  Because such sampling
can simulate an arbitrary distribution (Lem.~\ref{lem:pdf}), MCMC
backends can sample for a target distribution in the form of
$\lambda x. \exp^{-f(x)}$ where $f$ is the function to minimize, which
allows its sampling process to attain the minimum points more
frequently than the other points.  Also, MCMC has many
sophisticated techniques that integrate with classic local search
techniques, such as the Basinhopping
algorithm~\cite{leitner1997global} mentioned above.  Some variants of
MCMC can even handle high dimensional
problems~\cite{robbins1951stochastic}, or non-smooth objective
functions~\cite{eckstein1992douglas}.  Fig.~\ref{fig:background:1}(b)
illustrates a typical MCMC cycle.  Steps $p_0\rightarrow p_1$,
$p_2\rightarrow p_3$, and $p_4 \rightarrow p_5$ are the local
optimization; Steps $p_1\rightarrow p_2$ and $p_3\rightarrow p_4$ aim
to prevent the MCMC sampling from getting trapped in the local minima.

\section{Branch Coverage Based Testing}
 \label{sect:overview}




  




%
%
%


This section shows a detailed ME procedure in solving branch coverage
based testing for numerical code. 


\begin{figure}[t]
\centering
\includegraphics[width=1.0\linewidth]{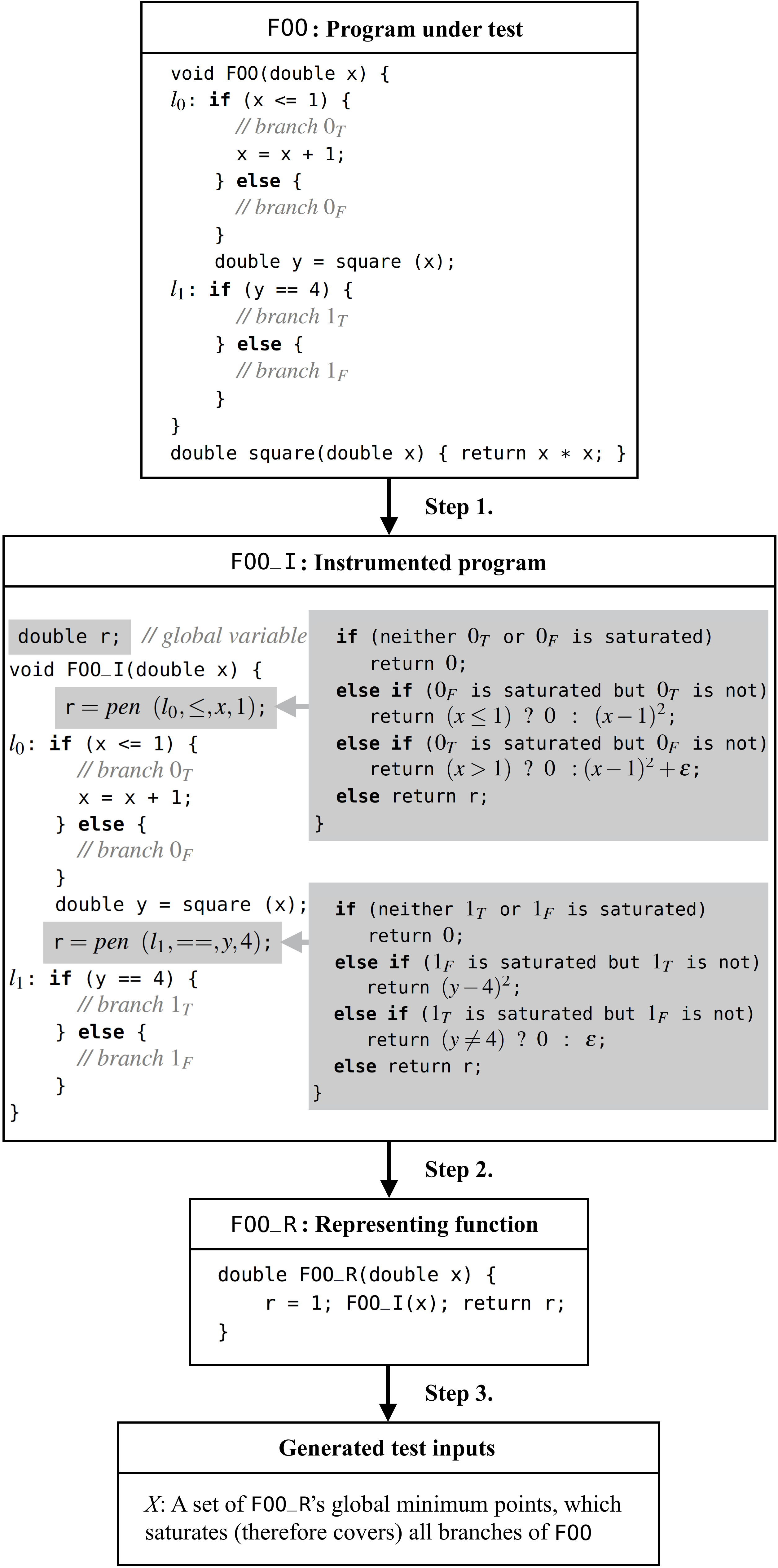}
\caption{Branch coverage based testing via $\ME$. The goal is to
  saturate (therefore cover) all branches of \FOO, \ie,
  $\{0_T,0_F,1_T,1_F\}$. }
\label{fig:algo:injecting_pen}
\end{figure}

\subsection{Problem Statement}
\label{sect:overview:pb}

\begin{definition}
  
  Let \FOO be the program under test with $N$ conditional statements,
  labeled by $l_0$,\ldots,$l_{N-1}$.  Each $l_i$ has a {\tt true}
  branch $i_T$ and a {\tt false} branch $i_F$.  The problem of \emph{branch
  coverage based testing} aims to find a set of inputs
  $X\subseteq \dom{\FOO}$ that \emph{covers} all $2*N$ branches of
  \FOO.  Here, we say a branch is ``covered'' by $X$ if it is passed
  through by the path of executing \FOO with an $x\in X$.  We
  scope the problem with three assumptions:
\begin{itemize}
\item[(a)]  The inputs of  \FOO are floating-point numbers.
\item[(b)]  Each Boolean condition in \FOO  is an arithmetic comparison in the form of $a~op~b$, where $\myop\in\{==,\leq,<,\neq,\geq,>\}$, and $a$ and $b$ are floating-point variables or constants.
\item[(c)]   Each branch of  \FOO is feasible, \ie, it is covered by $\dom{\FOO}$.
\end{itemize}
Assumptions (a) and (b) are set for modeling numerical
code. Assumption (c) is set to simplify our presentation. Our
implementation will partially relax these assumptions (details in
Appendix~\ref{sect:implem}).

\label{def:overview:branchcoverage}
\end{definition}






We introduce the concept of a \emph{saturated branch} and use it to reformulate Def.~\ref{def:overview:branchcoverage}.
\begin{definition}
  Let $X$ be a set of inputs generated during the process of testing.
  We say a branch is \emph{saturated} by $X$ if the branch itself and
  all its descendant branches, if any, are covered by $X$. Here,  a branch
  $b'$ is called a descendant branch of $b$ if there exists a
  segment of control flow path from $b$ to $b'$.  Given  $X\subseteq\dom{\FOO}$, we write
\begin{align}
\Explored(X)
\end{align}
for the set of branches saturated by $X$. 

\label{def:overview:saturate}
\end{definition}

\noindent  \begin{minipage}{0.69\linewidth}
\paragraph{}
To illustrate Def.~\ref{def:overview:saturate}, suppose that an input
set $X$ covers $\{0_T,0_F,1_F\}$ in the control-flow graph on the
right, then $\Explored(X)= \{0_F,1_F\}$; $1_T$ is not saturated
because it is not covered, and $0_T$ is not saturated because its
descendant branch $1_T$ is not covered.

  \end{minipage}%
  \begin{minipage}{0.3\linewidth}
\flushright   \includegraphics[width=0.95\linewidth]{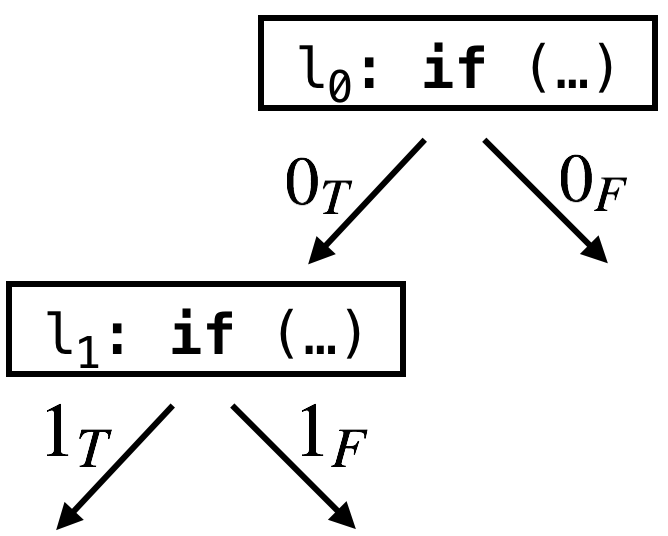}
\end{minipage}


\paragraph{}
Observe that if \emph{all} \FOO's branches are covered by an input set
$X$, they are also saturated by $X$, and vice versa. This observation
allows us to reformulate the branch coverage based testing problem following the lemma below.

\begin{lem} Let \FOO be a program under test with the assumptions Def.~\ref{def:overview:branchcoverage}(a-c) satisfied.  \emph{The goal of branch coverage based testing}  can be  stated as to find a small set of inputs
  $X\subseteq \dom{\FOO}$ that saturates all \FOO's branches.
\label{def:overview:pb}
\end{lem}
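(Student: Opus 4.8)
The plan is to reduce the statement to the set-theoretic equivalence already hinted at just before the lemma: namely, that for every $X\subseteq\dom{\FOO}$, \emph{all} branches of \FOO are covered by $X$ if and only if \emph{all} branches of \FOO are saturated by $X$. Once this equivalence is in hand, the lemma follows: the original formulation (Def.~\ref{def:overview:branchcoverage}) asks for an $X$ covering all $2N$ branches, and the equivalence lets us replace "covering" by "saturating" without changing the set of admissible $X$. The adjective "small" carries no formal content here — it is only an efficiency desideratum for the algorithm — so it can be carried over verbatim.

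For the equivalence I would argue both directions by unfolding Def.~\ref{def:overview:saturate}. For the ($\Leftarrow$) direction, suppose every branch of \FOO is saturated by $X$; by definition, a branch being saturated entails in particular that the branch \emph{itself} is covered by $X$, so every branch is covered. For the ($\Rightarrow$) direction, suppose every branch of \FOO is covered by $X$, and fix an arbitrary branch $b$. Its descendant branches, as defined in Def.~\ref{def:overview:saturate} via segments of control-flow path, are again among the $2N$ branches $\{i_T,i_F : 0\le i<N\}$ of \FOO; hence each of them, together with $b$ itself, is covered by $X$ by hypothesis, so $b$ is saturated. Since $b$ was arbitrary, all branches are saturated. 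Combining the two directions yields the claim.

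I do not expect a genuine obstacle: the proof is a direct consequence of chasing the two definitions. The only points that merit an explicit sentence are (i) noting that the descendant-branch relation stays inside \FOO's finite branch set, so saturating a branch never requires covering anything outside the program, and (ii) recording that assumption~(c) of Def.~\ref{def:overview:branchcoverage} — every branch is feasible, i.e.\ covered by $\dom{\FOO}$ — is what makes the reformulated goal attainable at all, so that the restatement is meaningful rather than vacuous. If one wanted to be thorough, one could also remark that finiteness of the set of branches guarantees a finite witnessing $X$ exists, matching the "small set of inputs" phrasing.
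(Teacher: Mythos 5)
Your proposal is correct and follows essentially the same route as the paper, which justifies the lemma solely by the preceding observation that all branches are covered by $X$ if and only if all are saturated by $X$; you simply unfold Def.~\ref{def:overview:saturate} in both directions to make that observation explicit. Your added remarks on the informality of ``small'' and the role of assumption~(c) are consistent with the paper's own framing and introduce no gap.
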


\begin{remark}
In Lem.~\ref{def:overview:pb}, we expect the generated input set $X$ to be ``small'', because otherwise we can use $X=\dom{\FOO}$ which already saturates all branches under the assumption Def.~\ref{def:overview:branchcoverage}(c).
\end{remark}

\subsection{An Example}
\label{sect:overview:example}

We use a simple program \FOO in Fig.~\ref{fig:algo:injecting_pen} to
illustrate our approach.  The program has two conditional statements $l_0$
and $l_1$, and their {\tt true} and {\tt false} branches are denoted
by $0_T$, $0_F$ and $1_T$, $1_F$ respectively. The objective is to
find an input set that saturates all branches.  Our approach proceeds
in three steps:

\Paragraph{Step 1.}
We inject a global variable $\myr$ in \FOO, and, immediately before
each control point $l_i$, we inject an assignment
(Fig.~\ref{fig:algo:injecting_pen})
\begin{align}
\myr = \pen,
\end{align}
where $\pen$ invokes a code segment with parameters specific to $l_i$.
The idea of $\pen$ is to capture the \emph{distance of the program
  input from saturating a branch that has not yet been saturated}.  As
illustrated in Fig.~\ref{fig:algo:injecting_pen}, this distance
returns different values depending on whether the branches at $l_i$ are
saturated or not.

We denote the instrumented program by \FOOI. The key in Step 1 is to
design $\pen$ to meet certain conditions that allow us to approach the
problem defined in Lem.~\ref{def:overview:pb} as a mathematical optimization
problem. We will specify the conditions  in the next step.





\Paragraph{Step 2.} This step constructs the representing function that we have mentioned in Sect.~\ref{sect:intro}.  The representing function is the driver program \FOOR shown in 
Fig.~\ref{fig:algo:injecting_pen}. It initializes $\myr$ to $1$, invokes \FOOI and then returns $\myr$ as the output of \FOOR. That is, $\FOOR(x)$ for a given  input $x$  calculates the value of $\myr$ at the end of executing $\FOOI(x)$.

Our approach requires two conditions on \FOOR:
\begin{description}[align=left]
\item [C1.] $\FOOR(x)\geq 0$ for all $x$, and 
\item [C2.]   $\FOOR(x)=0$ if and only if $x$ saturates a new branch.  In other
words, a branch that has not been saturated by the generated input set
$X$ becomes saturated with $X\cup \{x\}$, \ie,
$\Explored(X)\neq \Explored(X\cup\{x\})$.
\end{description}
Imagine that we have designed $\pen$ so that \FOOR meets both C1 and
C2. Ideally, we can then saturate all branches of \FOO by repeatedly
minimizing \FOOR as shown in the step below.




\Paragraph{Step 3.}
In this step, we use MCMC to calculate the minimum points of
\FOOR. Other mathematical optimization techniques, \eg, genetic
programming~\cite{koza1992genetic}, may also be applicable, which we
leave for future investigation.  

We start with an input set $X=\emptyset$ and
$\Explored(X)=\emptyset$. We minimize \FOOR and obtain a minimum point
$x^*$ which necessarily saturates a new branch by condition C2. Then
we have $X=\{x^*\}$ and we minimize \FOOR again which gives another
input $x^{**}$ and $\{x^*,x^{**}\}$ saturates a branch that is not saturated by $\{x^*\}$.  We continue this process until all
branches are saturated. When the algorithm terminates, $\FOOR(x)$ must
be strictly positive for any input $x$, due to C1 and C2.




\paragraph{}
Tab.~\ref{tab:algo:example_illustration} illustrates a scenario of how
our approach saturates all branches of \FOO. Each ``\textbf{\#n}''
below corresponds to one line in the table.  We use $\pen_0$ and $\pen_1$
to denote $\pen$ injected at $l_0$ and $l_1$ respectively.
(\textbf{\#1}) Initially, $\Explored=\emptyset$. Any input saturates a
new branch.  Both $\pen_0$ and $\pen_1$ set $\myr=0$, and
$\FOOR=\lambda x.0$ (Fig.~\ref{fig:algo:injecting_pen}). Suppose
$x^*=0.7$ is found as the minimum point.
(\textbf{\#2})  The branch $1\myF$ is now saturated and $1\myT$ is not. Thus, $\pen_1$ sets $\myr = (y-4)^2$.  Minimizing $\FOOR$ 
gives $x^*=-3.0$, $1.0$, or $2.0$. We have illustrated this MCMC procedure in Fig.~\ref{fig:background:1}(b).  Suppose $x^*=1.0$ is found.
(\textbf{\#3}) Both $1\myT$ and $1\myF$, as well as $0_T$, are
saturated by the generated inputs $\{0.7,1.0\}$.  Thus, $\pen_1$
returns the previous $\myr$.  Then, $\FOOR$ amounts to $\pen_0$,
returning $0$ if $x>1$, or $(x-1)^2 + \epsilon$ otherwise, where $\epsilon$ is a  small positive constant.  Suppose
$x^* = 1.1$ is found as the minimum point.
(\textbf{\#4}) All branches have been saturated. In this case, both
$\pen_0$ and $\pen_1$ return the previous $\myr$. Then, $\FOOR$
becomes $\lambda x.1$, understanding that $\FOOR$ initializes $\myr$
as $1$.  The minimum point, \eg,  $x^*=-5.2$, necessarily
satisfies $\FOOR(x^*)>0$, which terminates the algorithm.

\begin{table}\footnotesize 
\setlength{\tabcolsep}{1.5pt}%
  \centering
  \caption{
    A scenario of how our approach saturates all branches of \FOO by repeatedly minimizing
    \FOOR.  Column ``$\Explored$'':  Branches that have been saturated. Column ``$\FOOR$'': The representing function and its plot. Column ``$x^*$'': The point where $\FOOR$ attains the minimum.  Column ``$\myX$'': Generated test inputs.}\label{tab:algo:example_illustration}. 
\scriptsize
  \begin{tabular}{l ccccc}
\toprule
 \#~~~&  $\Explored$  &$\FOOR$ & &  $x^*$  & $\myX$ \\ \midrule 
1& $\emptyset$ & $\lambda x. 0$ &\raisebox{-.5\height}{\includegraphics[width=.27\linewidth]{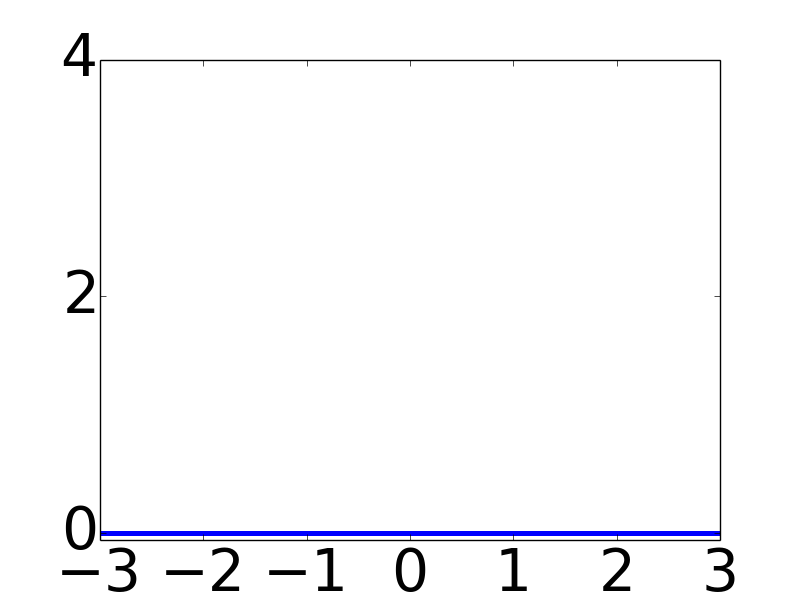}} &$0.7$ & $\{0.7\}$ \\  \arrayrulecolor{light-gray}\hline
2& $\{1\myF\}$ & $\lambda x.\begin{cases}((x+1)^2-4)^2& x\leq 1 \\ (x^2-4)^2 & \text{else} \end{cases}$ &\raisebox{-.5\height}{\includegraphics[width=.27\linewidth]{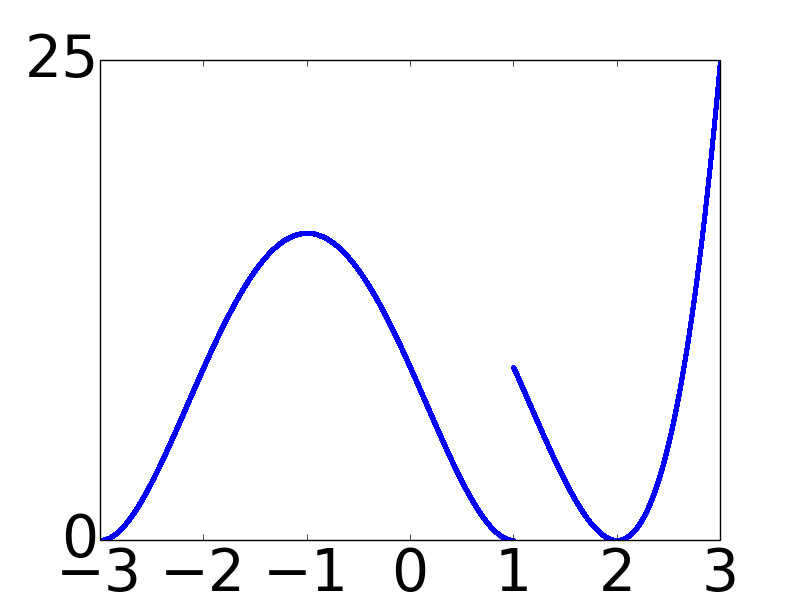}} &$1.0$ & $\{0.7,1.0\}$ \\  \hline
3& \parbox{3em}{$\{0\myT,1\myT,\\ 1\myF\}$} & $\lambda x. \begin{cases}0 & x>1 \\ (x-1)^2+\epsilon &\text{else} \end{cases}$ &\raisebox{-.5\height}{\includegraphics[width=.27\linewidth]{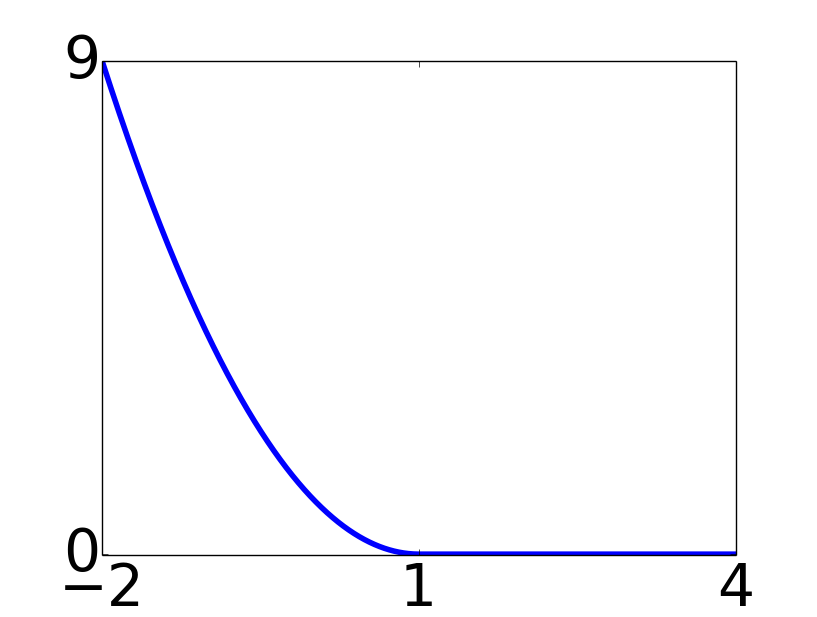}} &$1.1$ & \parbox{3em}{$\{0.7,1.0,$ \\ $1.1\}$} \\  \hline
4&\parbox{3em}{$\{0\myT,1\myT,\\ 0\myF, 1\myF\}$} & $\lambda x. 1$ &\raisebox{-.5\height}{\includegraphics[width=.27\linewidth]{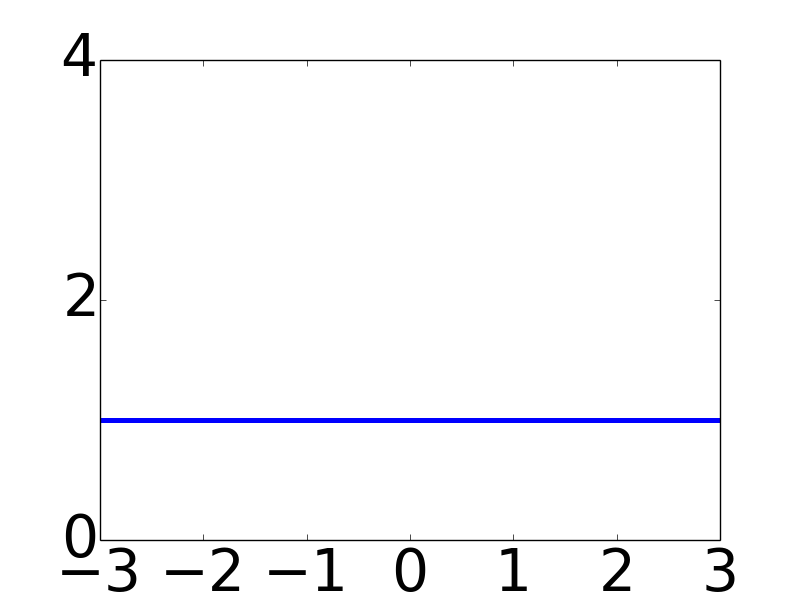}} &$-5.2$ & \parbox{3em}{$\{0.7,1.0,\\1.1,-5.2\}$} \\  
\arrayrulecolor{black}
\bottomrule
  \end{tabular}

\end{table}

\begin{remark}

Given an input $x$, the value of $\FOOR(x)$ may change during the
minimization process.  In fact, \FOOR is constructed with
injected $\pen$ which returns different values at $l_i$ depending on whether the
branches $i_T$ and $i_F$ have been saturated. Thus, the minimization step in
our algorithm differs from existing  mathematical optimization techniques where the objective function is
fixed~\cite{Zoutendijk76}.

\end{remark}



\subsection{Algorithm}
\label{sect:theory:algo}
We provide details corresponding to the three steps in
Sect.~\ref{sect:overview:example}. The algorithm is summarized in
Algo.~\ref{theory:algo:coverme}.

\Paragraph{Algorithm for Step 1.} 
The outcome of this step is the instrumented program \FOOI.  As
explained in Sect.~\ref{sect:overview:example}, the essence is to
inject the variable $\myr$ and the assignment $\myr=\pen$ before each conditional
statement (Algo.~\ref{theory:algo:coverme}, Lines 1-4).



To define $\pen$, we first introduce a set of helper functions that
are sometimes known as \emph{branch distance}. There are many
different forms of branch distance in the
literature~\cite{Korel:1990:AST:101747.101755,McMinn:2004:SST:1077276.1077279}.
We define ours with respect to an arithmetic condition $a~op~b$.

\begin{definition}\label{def:overview:bd}
Let $a,b\in\Real$, $\myop\in\{==,\leq,<,\neq,\geq,>\}$, $\epsilon\in\Real_{> 0}$.
 We define branch distance $d_{\epsilon}(\myop, a,b)$ as follows:
\begin{align}
d_{\epsilon}(==, a,b)&\mydef (a-b)^2 \\
d_{\epsilon}(\leq, a,b)&\mydef (a\leq b)~?~0:(a-b)^2\\
d_{\epsilon}(<,a,b)&\mydef (a <b)~?~0:(a-b)^2+\epsilon\\
d_{\epsilon}(\neq, a,b)&\mydef (a\neq b)~?~0:\epsilon 
\end{align}
and $d_{\epsilon}(\geq,a,b)\mydef d_{\epsilon}(\leq,b,a)$,
$d_{\epsilon}(>,a,b)\mydef d_{\epsilon}(<,b,a)$. 
Usually, the parameter $\epsilon$ is a small
constant, so  we drop the explicit reference to $\epsilon$ when
using the branch distance.
\end{definition}

The intention of $d(\myop,a,b)$ is to {quantify how far $a$ and $b$
  are from attaining $a~\myop~b$}. For example, $d(==,a,b)$ is
strictly positive when $a\neq b$, becomes smaller when $a$ and $b$
go closer, and vanishes when $a==b$. The following property holds:
\begin{align}
d(\myop,a,b)\geq 0 \text{~and~} d(\myop,a,b)=0\Leftrightarrow a~\myop~b.
\label{eq:overview:brDist}
\end{align}

 As an analogue, we set $\pen$
to {quantify how far an input is from saturating a new branch}.
We define $\pen$ following Algo.~\ref{theory:algo:coverme}, Lines~14-23.
\begin{definition}
For  branch coverage based testing, the function $\pen$ has four parameters, namely, the label of the
conditional statement $l_i$, and $\myop$, $a$ and $b$ from the arithmetic
condition $a~\myop~b$.
\begin{itemize}
\item [(a)]  If neither of the two branches at $l_i$ is
saturated, we let $\pen$ return $0$ because any input saturates a
new branch (Lines~16-17). 
\item [(b)] If one branch at $l_i$ is saturated but the other is
not, we set $\myr$ to be the distance to the unsaturated branch (Lines~18-21).
\item [(c)] If both branches at $l_i$ have already been saturated,
$\pen$ returns the previous value of the global variable $\myr$ (Lines~22-23).
\end{itemize}

  For example, the two instances of $\pen$ at $l_0$ and $l_1$ are
invoked as $pen(l_i, \leq,x,1)$ and $\pen(l_1,==,y,4)$ respectively in
Fig.~\ref{fig:algo:injecting_pen}.

\label{def:overview:pen}
\end{definition}


\SetKwProg{myproc}{Procedure}{}{} \SetKwInput{Gb}{Global}
\begin{algorithm} [t]\footnotesize 
  \DontPrintSemicolon
  \KwIn{\begin{tabularx}{.8\linewidth}[t]{>{$}l<{$} X}
      \FOO & Program under test \\
      \nStart & Number of starting points\\
      \LM & Local optimization  used in \mcmc\\
      \niter & Number of iterations for \mcmc\\
    \end{tabularx}} 

\KwOut{\begin{tabularx}{.8\linewidth}[t]{>{$}l<{$} X}
      X & Generated input set \\
    \end{tabularx}}

\BlankLine
  \tcc{Step 1}
Inject   global variable $\myr$  in $\FOO$\; 
\For{ conditional
    statement $l_i$ in {\upshape $\FOO$} }{ 
    Let the  Boolean condition at  $l_i$ be 
    $\mylhs~\myop~\myrhs$ where $\myop\in\{\leq,<,=,>,\geq,\neq\}$\;
    Insert assignment $\myr=\pen(l_i,\myop,\mylhs, \myrhs)$ before $l_i$ }
  \tcc{Step 2}
  Let $\FOOI$ be the newly instrumented program, and $\FOOR$ be:
  \lstinline!  double FOO_R(double x) {r = 1; FOO_I(x); return r;}!\;

\tcc{Step 3}
Let $\Explored=\emptyset$\;
Let $X =\emptyset$ \; 
\For{$k=1$ to $\nStart$}{
Randomly take a  starting point $x$\;
Let $x^* = \text{\mcmc}(\FOOR,x)$\;
  \lIf{\upshape $\FOOR(x^*) = 0$}{ $ X = X \cup \{x^*\}$  }
Update $\Explored$ \;
}

\Return $X$\;

  \BlankLine

\SetKwProg{Fn}{Function}{}{}

\Fn{$\pen(l_i,\myop, \mylhs,\myrhs)$}{
Let $i\myT$ and $i\myF$ be the {\tt true} and the {\tt false} branches at $l_i$\;
\If{$i\myT\not\in\Explored$ and $i\myF\not\in\Explored$ }
{\Return $0$}
\uElseIf{$i\myT\not\in\Explored$ and $i\myF\in\Explored$ }
{\Return $\dist(\myop,\mylhs,\myrhs)$ 
 \tcc{$d$: Branch distance}
}
\uElseIf{$i\myT\in\Explored$ and $i\myF\not\in\Explored$ }
{\Return $\dist(\overline{\myop}, \mylhs,\myrhs)$  \tcc{$\overline{\myop}$: the opposite of $\myop$}}
\uElse(\tcc*[h]{$i\myT\in\Explored$ and $i\myF\in\Explored$ }){\Return $\myr$ }
}

  \BlankLine
 \Fn{{\upshape \mcmc}($\energy$, $x$)}
 {

     $\xloc=\lmin(\energy, x)$\; \tcc{Local minimization}  

     \For{$k = 1$ \KwTo $\niter$}{

       Let $\delta$ be a random perturbation generation from a predefined distribution\;
       Let  $\xpro=\lmin(\energy, \xloc + \delta)$\;

       \lIf{$\energy(\xpro)<\energy(\xloc)$}{$\mathit{accept} = \mathit{true}$}

       \Else{

         Let $m$ be a random number  generated from the uniform distribution on $[0,1]$\;

         Let ${\mathit{accept}}$ be the Boolean $m<\exp(\energy(\xloc) - \energy(\xpro)) $ 
        }

       \lIf{$\mathit{accept}$}{ $\xloc = \xpro$}
     }    

     \Return $\xloc$\;
 }

  \caption{Branch coverage based testing via ME.}
  \label{theory:algo:coverme} 
\end{algorithm}

\Paragraph{Algorithm for Step 2.}
This step constructs the representing function \FOOR
(Algo.~\ref{theory:algo:coverme}, Line~5). Its input domain is the
same as that of \FOOI and \FOO, and its output domain is {\tt double},
so to simulate a real-valued mathematical function which can then be
processed by the mathematical optimization  backend.

\FOOR initializes $\myr$ to 1. This is essential for the correctness of the
algorithm because we expect \FOOR returns a non-negative value when all branches
are saturated (Sect.~\ref{sect:overview:example}, Step 2).
\FOOR then calls $\FOOI(x)$ and records the value of $\myr$ at the end
of executing $\FOOI(x)$. This $\myr$ is the returned value of \FOOR.

As mentioned in Sect.~\ref{sect:overview:example}, it is important to
ensure that \FOOR meets conditions C1 and C2.  The condition {C1}
holds true since \FOOR returns the value of the instrumented $\myr$, which
is never assigned a negative quantity. The lemma below states \FOOR
also satisfies C2.

\begin{lem}
Let \FOOR be the program constructed in Algo.~\ref{theory:algo:coverme},  and $S$ the branches that have been saturated. Then, for any input $x\in\dom{\FOO}$, $\FOOR(x)=0$ $\Leftrightarrow $  $x$ saturates a branch that does not belong to $S$. 
\label{lem:overview:c2}
\end{lem}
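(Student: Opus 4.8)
The plan is to track the injected global variable $\myr$ through a single run of $\FOOR(x)$ and to pin down its final value in terms of the last conditional the run traverses. First I would observe that the instrumentation is transparent to control flow --- each $\pen$ call only reads and writes the fresh variable $\myr$ --- so $\FOOI(x)$ and $\FOO(x)$ follow the same execution path $\pi$. Since $\FOOR$ sets $\myr=1$ and then runs $\FOOI(x)$, and every conditional on $\pi$ is immediately preceded by an assignment $\myr=\pen(\dots)$, the returned value $\FOOR(x)$ is exactly the value the $\pen$ call at the \emph{last} conditional on $\pi$ assigns to $\myr$ (it is $1$ when $\pi$ traverses no conditional at all, and then $x$ covers no branch so the equivalence is vacuous). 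Because a case-(c) $\pen$ merely re-returns the current $\myr$, this can be folded back: $\FOOR(x)$ equals the branch-distance value produced at the last \emph{active} conditional $l$ on $\pi$ --- one at which at least one branch is not in $S$ --- or $1$ if no active conditional is traversed.

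Next I would evaluate that $\pen$ call using Definition~\ref{def:overview:pen} together with the branch-distance identity~\eqref{eq:overview:brDist}, which states $d(\myop,a,b)=0 \iff a~\myop~b$. A short case split on which of the two branches at $l$ lie in $S$ gives the clean statement that this value is $0$ exactly when the branch $\pi$ actually takes at $l$ is \emph{not} in $S$: in case (a) both branches are outside $S$ and $\pen=0$ unconditionally, while in case (b) $\pen$ is the branch distance of the guard that leads into the \emph{unsaturated} branch, which vanishes iff $\pi$ enters that branch, i.e.\ iff the taken branch is the one outside $S$. Combined with the first step this yields $\FOOR(x)=0 \iff$ at its last active conditional the run of $x$ takes a branch $c\notin S$.

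The last step is to identify ``the run takes a branch $c\notin S$ at its last active conditional'' with ``$x$ saturates a branch not in $S$''. For the ($\Leftarrow$) direction of this identification: every conditional traversed after $l$ is case-(c), so both of its branches --- hence, by Definition~\ref{def:overview:saturate}, their entire descendant cones --- are already covered by the current input set $X$; since the descendants of $c$ are exactly the branches reachable from the next conditional on $\pi$, the whole descendant cone of $c$ is covered by $X$, while $c$ itself is now covered by $x$, so $c$ becomes saturated by $X\cup\{x\}$, and $c\notin S=\Explored(X)$ witnesses $\Explored(X\cup\{x\})\neq\Explored(X)$. For the converse, if $\FOOR(x)>0$ then at $l$ the run takes a branch that \emph{is} in $S$ (or there is no active conditional at all); I would argue by contradiction that $x$ then saturates nothing new --- a newly saturated $b^{\ast}\notin S$ forces $\pi$ to cover some branch not covered by $X$, hence to take a non-$S$ branch at an active conditional strictly before $l$, but the not-yet-saturated sibling at $l$ (which exists, since the taken branch there is in $S$ while the conditional is active) has a descendant left uncovered by $X\cup\{x\}$, and that descendant is also a descendant of $b^{\ast}$, contradicting that $b^{\ast}$ is saturated.

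The hard part will be this last step, namely the bookkeeping of the descendant relation of Definition~\ref{def:overview:saturate}. One has to make precise both that ``no active conditional remains after $l$'' really means ``the whole descendant cone of the branch taken at $l$ is already covered by $X$'', and, in the converse direction, that a genuinely new saturation must route $\pi$ through a non-$S$ branch at an active conditional that, by the position of $l$, cannot influence the final $\myr$. This is straightforward for loop-free (in particular structured) control-flow graphs, where the first conditional reachable from a branch is uniquely determined and every ``ancestor'' branch in the descendant order actually lies on $\pi$; with loops the same conclusions need a more careful treatment of paths that revisit a conditional, which is where I would concentrate the effort (or where I would narrow the statement). Everything else --- the behaviour of $\FOOR$ and $\pen$ and the sign/zero analysis of the branch distances --- is routine unwinding of the definitions.
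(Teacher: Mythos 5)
Your proposal takes essentially the same route as the paper's proof: both reduce $\FOOR(x)$ to the value assigned by $\pen$ at the last conditional on the executed path that is not in the both-branches-saturated case (the paper's groups \emph{P1}/\emph{P2}/\emph{P3} correspond to your cases (a)/(b)/(c)), use Eq.~\eqref{eq:overview:brDist} to read off when that value vanishes, and then identify this with the saturation of a new branch. The only difference is that you spell out the descendant-cone bookkeeping in the last step (and flag its reliance on structured, loop-free control flow), which the paper compresses into the single assertions that one can ``immediately conclude that $x$ saturates a new branch'' and that the later conditionals' ``descendant branches have been saturated.''
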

\begin{proof}
  We first prove the $\Rightarrow$ direction.  Take an arbitrary $x$
  such that $\FOOR(x)=0$. Let $\tau=[l_0,\ldots l_n]$ be the path in
  $\FOO$ passed through by executing $\FOO(x)$. We know, from Lines
  2-4 of the algorithm, that each $l_i$ is preceded by an invocation
  of $\pen$ in $\FOOR$. We write $\pen_i$ for the one injected before
  $l_i$ and divide $\{\pen_i\mid i\in[1,n]\}$ into three groups. For
  the given input $x$, we let {\it P1}, {\it P2} and {\it P3} denote the groups of $\pen_i$ that
  are defined in Def.~\ref{def:overview:pen}(a),
  (b) and (c),
  respectively.  Then, we can always have a prefix path of
  $\tau=[l_0,\ldots l_m]$, with $0\leq m\leq n$ such that each
  $\pen_i$ for $i\in[m+1,n]$ belongs to {\it P3}, and each $\pen_i$ for
  $i\in [0,m]$ belongs to either {\it P1} or {\it P2}.  Here, we can guarantee the
  existence of such an $m$ because, otherwise, all $\pen_i$ belong in
  {\it P3}, and $\FOOR$ becomes $\lambda x.1$. The latter contradicts the
  assumption that $\FOOR(x)=0$.
  Because each $\pen_i$ for $i>m$ does nothing but performs
  $\myr=\myr$, we know that $\FOOR(x)$ equals to the exact value of
  $\myr$ that $\pen_m$ assigns. Now consider two disjunctive cases on
  $\pen_m$. If $\pen_m$ is in {\it P1}, we immediately conclude that
  $x$ saturates a  new branch. Otherwise, if $\pen_m$ is in {\it P2}, we
  obtains the same from Eq.~\eqref{eq:overview:brDist}.
Thus,  we have established the $\Rightarrow$
  direction of the lemma.

  To prove the $\Leftarrow$ direction, we use the same
  notation as above, and let $x$ be the input that saturates a new branch,
  and $[l_0,\ldots, l_n]$ be the exercised path. Assume that $l_m$
  where $0\leq m\leq n$ corresponds to the newly saturated branch. We
  know from the algorithm  that (1) $\pen_m$ updates
  $\myr$ to $0$, and (2) each $\pen_i$ such that $i>m$ maintains the
  value of $\myr$ because their descendant branches have been saturated.
We have thus proven the $\Leftarrow$ direction of the lemma.
\end{proof}



\Paragraph{Algorithm for Step 3.}
The main loop (Algo.~\ref{theory:algo:coverme}, Lines 8-12) relies on
an existing MCMC engine. It takes an objective function and a starting
point and outputs $x^*$ that it regards as a minimum point.  Each
iteration of the loop launches MCMC from a randomly selected starting
point (Line 9). From each starting point, MCMC computes the minimum
point $x^*$ (Line 10). If $\FOOR(x^*)=0$, $x^*$ is added to the set of
the generated inputs $X$ (Line 11).  Lem.~\ref{lem:overview:c2}
ensures that $x^*$ saturates a new branch in the case of
$\FOOR(x^*)=0$. Therefore, in theory, we only need to set
$\nStart=2*N$ where $N$ denotes the number of conditional statements,
so to saturate all $2*N$ branches.  In practice, however, we set
$\nStart>2*N$ because MCMC cannot guarantee that its output is a true
global minimum point.


The MCMC procedure (Algo.~\ref{theory:algo:coverme}, Lines 24-34) is
also known as the Basinhopping
  algorithm~\cite{leitner1997global}. It is an MCMC sampling over the
space of the local minimum points~\cite{Li01101987}.  The random
starting point $x$ is first updated to a local minimum point $\xloc$
(Line 25). Each iteration (Lines 26-33) is composed of the two phases
that are classic in the {Metropolis-Hastings} algorithm family of
MCMC~\cite{citeulike:831786}. In the first phase (Lines 27-28), the
algorithm \emph{proposes} a sample $\xpro$ from the current sample
$x$. The sample $\xpro$ is obtained with a perturbation $\delta$
followed by a local minimization, \ie, $\xpro=\LM(f,\xloc+\delta)$
(Line 28), where \LM denotes a local minimization in Basinhopping, and
$f$ is the objective function.  The second phase (Lines 29-33) decides
whether the proposed $\xpro$ should be accepted as the next sampling
point. If $\energy(\xpro)<\energy(\xloc)$, the proposed $\xpro$ will
be sampled; otherwise, $\xpro$ may still be sampled, but only with the
probability of $\exp((\energy(\xloc)-\energy(\xpro))/T)$, in which $T$
(called the annealing temperature~\cite{Kirkpatrick83optimizationby})
is set to $1$ in Algo.~\ref{theory:algo:coverme} for simplicity.

\section{Mathematical Execution}
\label{sect:theory}
\begin{figure*}
\includegraphics[width=1\linewidth]{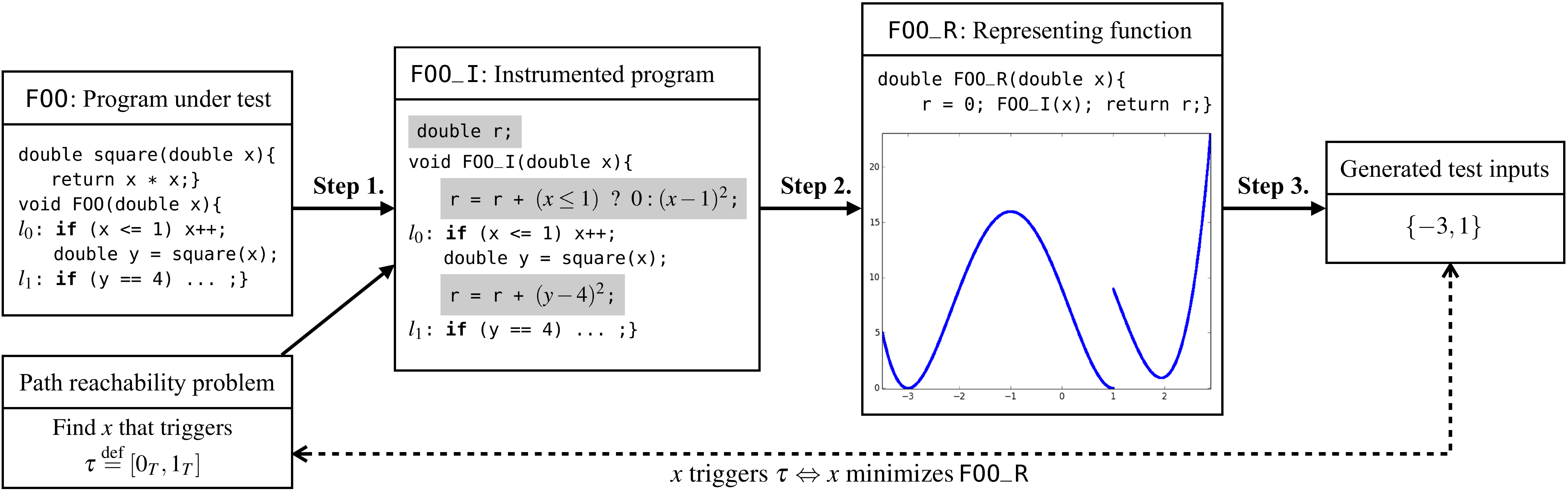}
\caption{Path reachability testing via $\ME$. The goal of this example is to find a
  test input that triggers the path $[0\myT,1\myT]$ of the program
  $\FOO$.}\label{fig:overview:ME_illustration}
\end{figure*}

\emph{Predicate as function} is a common
concept in mathematics.  As G\"odel stated in his 1931
work~\cite{Davis:2004:UBP:1098684,godel1931formal},
\begin{quote}
  There shall correspond to each relation $R$
  ($R\subseteq \Integer^n$) a representing function
  $\phi(x_1,\ldots, x_n)=0$ if $R(x_1,\ldots,x_n)$ and
  $\phi(x_1,\ldots,x_n)=1$ if $\neg R(x_1\ldots x_n)$.
\end{quote}
Traditional representing function in its Boolean nature is a
predicate/set indicator of two states, essentially being true or
false.  For example, $\mathnormal{even} (N)$ that decides whether an
integer $N$ is even can be represented by the function
$\lambda x. (x\ \text{mod}\ 2 ==0)$.

In this section, we present Mathematical Execution ($\ME$) by
extending the Boolean-valued representing function to a real-valued
calculus 
to  address  a spectrum of automated testing problems for numerical code, which we unify in the category of the \emph{search problem}.

\subsection{Search Problem}

\begin{definition}
The \emph{search problem} with regard to a set $X$ aims to 
\begin{itemize}
\item[(a)] find an $x\in\myX$ if $\myX\neq\emptyset$, and
\item[(b)]  report ``not found'' if  $X=\emptyset$.
\end{itemize}
Usually, we have a search space $\myU$, and $\myX$ is specified implicitly as a subset of  $\myU$. We denote the search problem by $(\myX,\myU)$.  In this paper, we deal with numerical code, and thus,  we assume that  $\myX$ is a subset of $\Real^N$. We also assume that $\myX$  is decidable so that we  can check whether an  $x\in\myU$ is an element of $\myX$.
\label{def:theory:search}
\end{definition}

\begin{example}
  A search problem can be any computational task that attempts to find
  an element from a set. 
\begin{itemize}
\item[(a)]
 \emph{As per} the notation used in Sect.~\ref{sect:intro},
 an automated testing problem of program \FOO is a search
  problem $(\myX,\myU)$ where $\myX=\{x \mid \FOO(x)\failure\}$ and
  $\myU=\dom{\FOO}$. 
\item[(b)] Another search problem is satisfiability checking, where
  $\myX$ is the set of the models of a constraint, and $\myU$ is the
  value domain to which the variables can be assigned.
\end{itemize}
\label{eg:theory:sat}
\end{example}

\subsection{Representing Function}
\begin{definition}
  A function $\repf$ is said to be a \emph{representing function} for
  the search problem $(\myX,\myU)$ if with any $x\in\myU$ there is
  an associated real value $\repf(x)$, such that
  \begin{itemize}
  \item[(a)] $\repf(x)\geq 0$;
  \item[(b)] every root of the representation function is a
    solution of the search problem,  \ie,
    $\repf(x)=0 \implies x\in\myX$; and
  \item[(c)]  the roots of $\repf$ include all solutions to the
    search problem, \ie,  $x\in \myX \implies \repf(x)=0$.
  \end{itemize}

\label{def:theory:rp}
\end{definition}

\begin{example} Let $(\myX,\myU)$ be a search problem. 
  \begin{itemize}
  \item[(a)] 
   A trivial representing function is
  $\lambda x. (x\in\myX)?~ 0:1$.

\item[(b)] A generic representing function is the point-set distance.
  Imagine that the search problem is embedded in a metric
  space~\cite{Rudin76} with a distance
  $\mathnormal{dist}:\myX\times\myX\rightarrow \Real$.  As a standard
  practice, we can lift $\mathnormal{dist}$ to ${\mathnormal{dist}_X}$
  defined as
  $\lambda x.\inf\{\mathnormal{dist}(x,x')\mid x'\in \myX\}$, where
  $\inf$ refers to the greatest lower bound, or infimum.  Intuitively,
  ${\mathnormal{dist}_X}$ measures the distance between a point
  $x\in \myU$ and the set $\myX$. It can be shown that
  ${\mathnormal{dist}_X}$ satisfies conditions
  Def.~\ref{def:theory:rp}(a-c), and therefore, is a representing
  function.

\item[(c)] 
  The representing function used in branch coverage based testing is the \FOOR
  constructed in Sect.~\ref{sect:overview}, where $\myX$ is the input
  that saturates a new branch and $\myU$ is the input domain. We have
  proved that \FOOR is a representing function in
  Lem.~\ref{lem:overview:c2}.

  \end{itemize}

\end{example}

The theorem below allows us to approach a search problem by minimizing
its representing function.
\begin{thm}
Let $\repf$ be the representing function for the search problem $(\myX,\myU)$, and  $\repf^*$ be the global minimum  of $\repf$.
\begin{itemize}
\item[(a)] 
 Deciding the emptiness of $\myX$  is equivalent to checking the sign of $\repf^*$, \ie,  
$\myX=\emptyset \Leftrightarrow \repf^*>0$.
\item[(b)] 
Assume $\myX\neq\emptyset$. Then,
$\forall x\in\myU, x ~\text{minimizes}~\repf \Leftrightarrow x \in \myX$.
\end{itemize}
\label{thm:theory:rp}
\end{thm}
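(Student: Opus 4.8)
The plan is to derive both claims directly from the three defining properties of a representing function (Def.~\ref{def:theory:rp}(a--c)), together with the standing hypothesis of the theorem that $\repf$ actually has a global minimum $\repf^*$, attained at some $x^*\in\myU$. Two elementary observations do most of the work. \emph{First:} for every $x\in\myU$, if $x\notin\myX$ then $\repf(x)>0$ --- indeed $\repf(x)\geq 0$ by (a), while $\repf(x)=0$ would force $x\in\myX$ by (b), so contraposing (b) and combining with (a) gives strict positivity. \emph{Second:} if $\myX\neq\emptyset$ then $\repf^*=0$ --- pick any $x_0\in\myX$; by (c) we have $\repf(x_0)=0$, and since $\repf\geq 0$ everywhere this value is already the minimum, so $\repf^*=0$.

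\emph{Part (a).} For $(\Rightarrow)$, assume $\myX=\emptyset$ and let $x^*$ realize the minimum, $\repf^*=\repf(x^*)$. Since $x^*\notin\myX$, the first observation gives $\repf(x^*)>0$, i.e.\ $\repf^*>0$. For $(\Leftarrow)$, assume $\repf^*>0$; then $\repf(x)\geq\repf^*>0$ for every $x\in\myU$, so no $x$ satisfies $\repf(x)=0$, and the contrapositive of (c) yields $x\notin\myX$ for all $x$, i.e.\ $\myX=\emptyset$.

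\emph{Part (b).} Assume $\myX\neq\emptyset$, so $\repf^*=0$ by the second observation. The $(\Leftarrow)$ direction is immediate from (c): $x\in\myX$ gives $\repf(x)=0=\repf^*$, hence $x$ is a global minimum point of $\repf$. For $(\Rightarrow)$, if $x$ minimizes $\repf$ then $\repf(x)=\repf^*=0$, and (b) gives $x\in\myX$.

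\emph{The one step that needs care} is the appeal to \emph{attainment} of the minimum in $\myX=\emptyset\Rightarrow\repf^*>0$. Without a minimizer one would only have $\repf^*=\inf_{x\in\myU}\repf(x)$, and the infimum of the (strictly positive, by the first observation) values $\repf(x)$ could still be $0$ --- for example $\repf=\lambda x.\,e^{-x^2}$ on $\myU=\Real$ with $\myX=\emptyset$ --- which would break part~(a). The theorem already sidesteps this by positing $\repf^*$ as the global \emph{minimum}, so attainment is built into the hypothesis; I would simply flag this dependence in the proof. Everything else is routine.
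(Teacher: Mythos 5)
Your proof is correct and follows essentially the same route as the paper's: your part (a) is the paper's argument restated in contrapositive form ($\myX\neq\emptyset\Leftrightarrow\repf^*=0$ versus your $\myX=\emptyset\Leftrightarrow\repf^*>0$, equivalent since $\repf^*\geq 0$), and your part (b) is the paper's chain of inclusions $0_{\repf}\subseteq\myM_{\repf}\subseteq\myX\subseteq 0_{\repf}$ written out pointwise. Your flag about attainment of the minimum is a fair and worthwhile observation --- the paper's own step ``$\repf^*=0$ implies that there exists an $x^*\in\myU$ s.t.\ $\repf(x^*)=0$'' silently relies on it at exactly the corresponding point.
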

\begin{proof}
 Proof of (a):   Suppose $X\neq \emptyset$. Let $x_0$ be an element of $X$.  We have $\repf^*\geq 0$ by Def.~\ref{def:theory:rp}(a). In addition, we have $\repf^* \leq \repf(x_0)$ since $\repf^*$ is the minimum. Then we have $\repf^*\leq 0$ because $\repf(x_0)=0$ due to Def.~\ref{def:theory:rp}(c). Thus  $\repf^*=0$.   Conversely,  $\repf^*=0$ implies that there exists an $x^*\in \myU$ \st $\repf(x^*)=0$. By Def.~\ref{def:theory:rp}(b), $x^*\in \myX$. Thus $\myX\neq \emptyset$. 

 Proof of (b): Let $0_{\repf}$ denote the set of the roots of $\repf$, and
 $\myM_{\repf}$ be the set of the minimum points of $\repf$. It
 suffices to show that
 $0_{\repf}\subseteq \myM_{\repf}\subseteq \myX \subseteq 0_{\repf}$
 under the condition $X\neq \emptyset$.  We have
 $0_{\repf}\subseteq \myM_{\repf}$ because a root of $\repf$ is
 necessarily a minimum point;  we have $\myM_{\repf}\subseteq X$
 because $X\neq \emptyset$ implies $\repf^*=0$ by Thm.~\ref{thm:theory:rp}(a). Take an
 arbitrary $x^* \in \myM_{\repf}$, $\repf(x^*)=\repf^*=0$ holds, and
 therefore $x^*\in X$ by Def.~\ref{def:theory:rp}(b);  we have $X\subseteq 0_{\repf}$
 from Def.~\ref{def:theory:rp}(c).

\end{proof}

\begin{remark}
  An instance of Thm.~\ref{thm:theory:rp}(a) is shown in
  Tab.~\ref{tab:algo:example_illustration}, Line 4. There, \FOOR
  attains the minimum $1$, which means that all branches have been
  saturated (namely $X=\emptyset$).  An instance of
  Thm.~\ref{thm:theory:rp}(b) is shown in
  Eq.~\eqref{eq:intro:me}. Note that Thm.~\ref{thm:theory:rp}(b) does
  not hold if we drop the assumption $X\neq \emptyset$.  In fact, any
  $\repf$ such as $\repf(x)>0$ is a representing function for a search
  problem $(\emptyset, \myU)$, but its minimum point, if any, can
  never be an element of the empty set $\myX$.
\end{remark}

\subsection{The ME procedure}

The representing function paves the way toward a generic solution to
the search problem.  The key parameter in the \ME procedure is
mathematical optimization.

\begin{definition} Let $\myU$ be a set, $\mu$ be a mathematical
  optimization algorithm that attempts to calculate a minimum point for an 
  objective function defined over $\myU$. We define the \emph{Mathematical
    Execution}  procedure as follows:

\begin{center}
\mbox{
  \parbox{.96\linewidth}{
\vspace*{0.5pt}
\noindent Input: A search problem $(\myX,\myU)$ \\
\noindent Output: An element of $\myX$, or ``not found''
\vspace*{-1.5pt}
\begin{description}[align=left]
\item[{M1.}] Construct the representing function $\repf$. 
\item[{M2.}] Minimize  $\repf$.  Let  ${x^*}$ be the minimum point obtained by $\mu$.
\item[{M3.}] Return $x^*$ if $x^*\in X$.  Otherwise return  ``not found''.
\end{description}
  }
}
 \end{center}
\end{definition}







The following corollary states that the $\ME$ procedure solves the search
problem, under the condition that the $\ME$ procedure is equipped with an ideal mathematical
optimization backend. Its proof (omitted) follows Thm.~\ref{thm:theory:rp}.
\begin{cor}
  Let $(\myX,\myU)$ be a search problem.  Assume that $\mu$ yields a true global minimum point in  M2 of the $\ME$ procedure.  Then, 
 \begin{enumerate}
 \item[(a)] the $\ME$ procedure returns an
   $x^*\in X$ if $X\neq \emptyset$,  and
 \item[(b)] the $\ME$ procedure returns ``not
   found'' if $X=\emptyset$.
 \end{enumerate}
\label{cor:theory:theoretical}
  \end{cor}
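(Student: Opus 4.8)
The plan is to derive both parts directly from Theorem~\ref{thm:theory:rp}, using the assumption that $\mu$ returns a true global minimum point $x^*$ of $\repf$ in step M2. First I would handle part (a). Suppose $\myX\neq\emptyset$. By Theorem~\ref{thm:theory:rp}(a), $\repf^*=0$, so the global minimum point $x^*$ produced by $\mu$ satisfies $\repf(x^*)=\repf^*=0$. By Definition~\ref{def:theory:rp}(b), $\repf(x^*)=0$ implies $x^*\in\myX$; equivalently, this is exactly the containment $\myM_{\repf}\subseteq\myX$ established inside the proof of Theorem~\ref{thm:theory:rp}(b). Hence the test ``$x^*\in\myX$'' in step M3 succeeds, and the $\ME$ procedure returns $x^*\in\myX$, as required.

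Next I would handle part (b). Suppose $\myX=\emptyset$. Then step M3 tests whether $x^*\in\myX$; since $\myX$ is empty, this test fails no matter what $x^*$ the optimizer returns (indeed, by Theorem~\ref{thm:theory:rp}(a), $\myX=\emptyset$ forces $\repf^*>0$, so $x^*$ is not even a root of $\repf$, confirming $x^*\notin\myX$ via the contrapositive of Definition~\ref{def:theory:rp}(c)). Therefore the $\ME$ procedure falls through to the ``Otherwise'' clause and returns ``not found''. Note the decidability of $\myX$ assumed in Definition~\ref{def:theory:search} is what makes the membership test in M3 effective, so the procedure is well defined in both cases.

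There is essentially no hard step here: the corollary is an immediate repackaging of Theorem~\ref{thm:theory:rp} once the idealized-oracle assumption on $\mu$ is in force. The only point requiring a moment's care is that in part (a) one must invoke $\repf^*=0$ (Theorem~\ref{thm:theory:rp}(a)) \emph{before} concluding $\repf(x^*)=0$ — without knowing the minimum value is $0$, a global minimum point of $\repf$ need not be a root, and the membership test in M3 could fail. This is precisely the role of the standing hypothesis $\myX\neq\emptyset$ in part (a), mirroring the remark after Theorem~\ref{thm:theory:rp} that the equivalence collapses when $\myX=\emptyset$.
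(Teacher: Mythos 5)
Your proof is correct and follows exactly the route the paper intends: the paper omits the proof of Cor.~\ref{cor:theory:theoretical}, stating only that it ``follows Thm.~\ref{thm:theory:rp}'', and your argument is a faithful filling-in of that derivation (part (a) via $\repf^*=0$ and Def.~\ref{def:theory:rp}(b), part (b) via the emptiness of $\myX$ making the membership test in M3 fail). Your closing observation about needing $\myX\neq\emptyset$ before concluding that a global minimum point is a root is exactly the right point of care and matches the paper's remark following the theorem.
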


\begin{remark}
  A known issue with testing is its incompleteness, \ie, ``Testing shows
  the presence of bugs, but not its absence''~\cite{EWD:EWD249}. In
  the context of Mathematical Execution, the above well-known remark corresponds
  to the fact that, in practice, if the $\ME$ procedure returns ``not found'' for the search
problem $(\myX,\myU)$, $\myX$ may still be non-empty.  We call this phenomenon \emph{practical incompleteness}.

The practical incompleteness occurs when the MO backend fails to yield
an accurate minimum point.  To clarify, we use $\underline{x^*}$ for
an exact global minimum point, and we use $x^*$ for the one calculated
by the MO backend $\mu$. Then we consider four disjunctive cases: 
\begin{itemize}
\item[(a)] $\repf({x^*})=0$ and $\repf(\underline{x^*})=0$;  
\item[(b)] $\repf({x^*})>0$ and $\repf(\underline{x^*})>0$; 
\item[(c)] $\repf({x^*})=0$ and $\repf(\underline{x^*})>0$;  
\item[(d)] $\repf({x^*})>0$ and $\repf(\underline{x^*})=0$. 
\end{itemize}
The \ME procedure remains correct for both (a) and (b). The case (c)
cannot happen because $\repf(\underline{x^*})<\repf(x)$ for all
$x$. The practical incompleteness occurs in (d), where the $\ME$
procedure returns ``not found'' but $X\neq\emptyset$.  Sect.~\ref{sect:discussion}
further discusses this incompleteness.
\label{remark:theory:practical}
\end{remark}

\subsection{Additional Examples}
\label{sect:theory:beyond}

This subsection aims to show that $\ME$ is a unified approach by
applying it on several other important search problems besides 
coverage-based testing. In each example, we illustrate $\ME$ with a
different representing function. 


\subsubsection{Path Reachability Testing}
\label{sect:overview:path}

Given a path $\tau$ of program $\FOO$, we call \emph{path reachability
  testing} the search problem $(\myX,\myU)$ with
\begin{align}
\myX=\{x\mid x \text{ triggers the path } \tau\}, \myU=\dom{\FOO}.
\end{align}
The path reachability problem has been studied as an independent
research topic~\cite{Miller:1976:AGF:1313320.1313530}, or more
commonly, as a subproblem in other testing
problems~\cite{DBLP:conf/pldi/GodefroidKS05,Korel:1990:AST:101747.101755,Lakhotia:2010:FSF:1928028.1928039}.

\paragraph{}
Consider the program $\FOO$ in Fig.~\ref{fig:overview:ME_illustration}(left), which is the same as that in
Fig.~\ref{fig:algo:injecting_pen}. Suppose that we want to trigger the
path $\tau=[0\myT,1\myT]$ (we denote a path by a sequence of
branches).  Our approach to this example problem is similar to the
three steps explained in Sect.~\ref{sect:overview:example}, except
that we design a different representing function here. We illustrate
the $\ME$ approach in Fig.~\ref{fig:overview:ME_illustration}.

\Paragraph{Step 1.}
We inject a global variable $\myr$ in $\FOO$ and the 
assignment 
\begin{align}
 \myr = \myr + d(op, a,b)
\end{align}
before the conditional statements, where $d$ is the
branch distance defined in Def.~\ref{def:overview:bd}.  The
instrumented program is shown as $\FOOI$ in
Fig.~\ref{fig:overview:ME_illustration}.   The assignment is to measure how
far the input has attained the desired path.



\Paragraph{Step 2.}
The value of $\myr$ is then retrieved through a driver program \FOOR
(Fig.~\ref{fig:overview:ME_illustration}), which initializes $\myr$ to
$0$ (unlike in Sect.~\ref{sect:overview:example}, where $\myr$ is initialized to $1$), calls \FOOI  and then returns $\myr$.  




\Paragraph{Step 3.}
A global minimum point of $\FOOR$ is calculated by an MO algorithm.
As shown in the graph of $\FOOR$ (Fig.~\ref{fig:overview:ME_illustration}), there are three local minimum
points, $\{-3,1,2\}$. Two of them, $\{-3,1\}$, attain the global
minimum $0$ and either solves the path reachability testing
problem. Note that the representing function is discontinuous at
$x=1$, but the MCMC procedure can easily find it (similar to  Fig.~\ref{fig:background:1}(b)).

Below we prove the correctness of the \ME solution. 
\begin{cor}
An input $x$ triggers $\tau$ iff $x$ minimizes $\FOOR$. 
\label{cor:theory:path_reachability2}
\end{cor}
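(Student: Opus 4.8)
The plan is to reduce Corollary~\ref{cor:theory:path_reachability2} to Theorem~\ref{thm:theory:rp}(b) by exhibiting $\FOOR$ (the driver constructed in Step~2 for path reachability) as a representing function for the search problem $(\myX,\myU)$ with $\myX=\{x\mid x\text{ triggers }\tau\}$ and $\myU=\dom{\FOO}$. Once that is done, and once we invoke the standing hypothesis that $\tau$ is feasible (so $\myX\neq\emptyset$), part~(b) of the theorem gives exactly the stated equivalence: $x$ minimizes $\FOOR$ iff $x\in\myX$, i.e.\ iff $x$ triggers $\tau$. So the whole content of the proof is verifying Def.~\ref{def:theory:rp}(a--c) for this particular $\FOOR$.

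First I would set up notation: for an input $x$, let $[l_{j_0},l_{j_1},\dots,l_{j_k}]$ be the sequence of conditional statements actually executed by $\FOO(x)$, and for each such $l_{j_t}$ let $a_t~\myop_t~b_t$ be its Boolean condition and $e_t\in\{\text{true},\text{false}\}$ the branch taken. Since $\FOOR$ initializes $\myr$ to $0$ and each injected assignment does $\myr \mathrel{+}= d(\myop,a,b)$ evaluated under the opposite-of-taken orientation when needed --- more precisely, $d$ is applied to the comparison corresponding to the branch the path $\tau$ prescribes at that location --- one shows by a straightforward induction along the execution that
\begin{align}
\FOOR(x) \;=\; \sum_{t} d\bigl(\myop_t^{\tau}, a_t, b_t\bigr),
\end{align}
a finite sum of branch-distance terms, one per conditional statement visited, where $\myop_t^{\tau}$ encodes the branch outcome demanded by $\tau$ at location $l_{j_t}$. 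Condition (a), $\FOOR(x)\geq 0$, is then immediate from Eq.~\eqref{eq:overview:brDist}, since each summand is non-negative.

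For (b) and (c) I would argue: $\FOOR(x)=0$ iff every summand vanishes iff, by Eq.~\eqref{eq:overview:brDist}, at every visited conditional statement the outcome actually taken agrees with the outcome $\tau$ prescribes. The only subtlety --- and the step I expect to be the main obstacle --- is confirming that this ``every visited branch matches $\tau$'' condition is genuinely equivalent to ``$x$ triggers $\tau$,'' which requires a short control-flow argument: if $x$ agrees with $\tau$ on the first $m$ visited branches but the set of visited conditionals could in principle diverge from those lying on $\tau$ after a mismatch, one must check that agreement on a prefix forces the execution to stay on $\tau$, so that the visited conditionals are exactly those of $\tau$ and agreement is total. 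This is where the instrumentation detail matters (the assignment is inserted \emph{before} each $l_i$, so the accumulated distance reflects precisely the conditionals reached), and it is essentially the same bookkeeping as in the proof of Lem.~\ref{lem:overview:c2}; I would handle it by induction on the position along $\tau$. Granting this, (b) and (c) both follow, $\FOOR$ is a representing function, feasibility of $\tau$ gives $\myX\neq\emptyset$, and Thm.~\ref{thm:theory:rp}(b) closes the proof.
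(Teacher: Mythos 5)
Your proposal is correct and follows essentially the same route as the paper: exhibit \FOOR as a representing function for $(\myX,\myU)$ and then invoke Thm.~\ref{thm:theory:rp}(b). The paper simply asserts that ``it can be shown'' that \FOOR satisfies Def.~\ref{def:theory:rp}(a--c), so your explicit sum-of-branch-distances verification, including the control-flow induction showing that prefix agreement with $\tau$ forces total agreement, fills in exactly the step the paper omits. The one small divergence is how $\myX\neq\emptyset$ is discharged: you posit feasibility of $\tau$ as a standing hypothesis, whereas the paper derives nonemptiness from Thm.~\ref{thm:theory:rp}(a) by observing that \FOOR attains the global minimum $0$ for this example --- a slightly cleaner move, since path feasibility is not otherwise assumed in that section (Def.~3.1(c) concerns branch feasibility for the coverage problem, not path feasibility here).
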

\begin{proof}
  First, it can be shown that the constructed \FOOR is a
  representing function for the path reachability problem, thus we can
  apply Thm.~\ref{thm:theory:rp}.  By
  Thm.~\ref{thm:theory:rp}(a), $X\neq\emptyset$ since \FOOR attains
  the global minimum $0$. Then we conclude from Thm.~\ref{thm:theory:rp}(b).
\end{proof}




\subsubsection{Boundary Value Analysis}\label{sect:overview:bva}

\begin{figure}
  \includegraphics[width=1.0\linewidth]{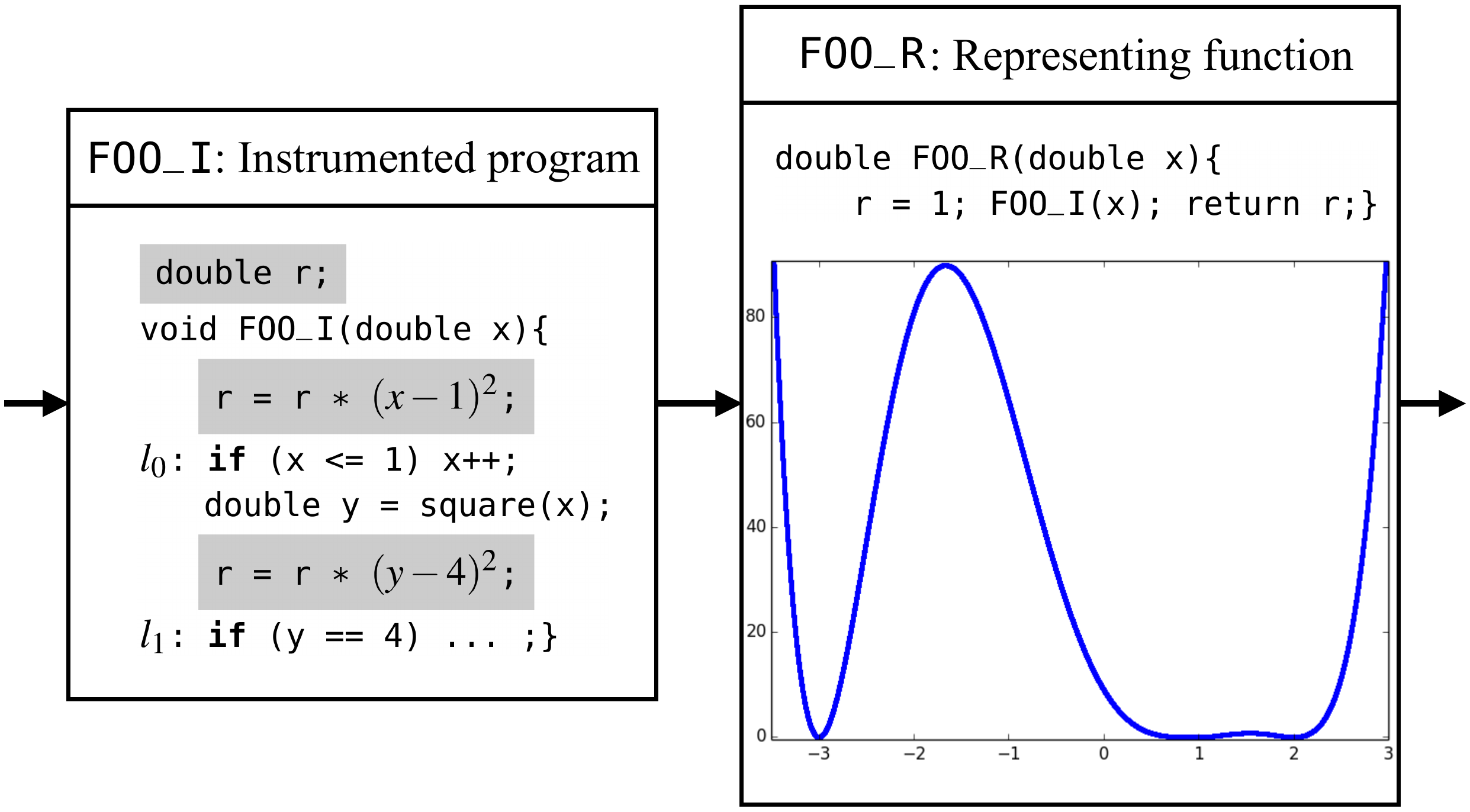}
  \caption{Boundary value analysis via $\ME$. This goal  is to find
    a test input to trigger a boundary condition, namely,  (a) $x=1$ at $l_0$ or (b) $y=4$ at $l_1$ of the program 
    $\FOO$ in Fig.~\ref{fig:overview:ME_illustration}.}
\label{fig:overview:bva}
\end{figure}
In testing, test inputs that explore ``boundary conditions'' usually have a higher
payoff than those that do
not~\cite{White:1980:DSC:1313339.1313789}. The problem of generating
such inputs is expressed abstractly as \emph{boundary value
  analysis}~\cite{DBLP:dblp_books/daglib/0012071,%
  Kosmatov:2004:BCC:1032654.1033824,DBLP:dblp_conf/icsm/PanditaXTH10},
which can be seen as the search problem with
\begin{align}
\myX=\{x\mid x \text{ triggers a boundary condition}\}, \myU=\dom{\FOO}.
\end{align}

Consider again the program $\FOO$ in
Fig.~\ref{fig:overview:ME_illustration}. The boundary value analysis
is to find test inputs that trigger a boundary condition (a) $x=1$ at
$l_0$, or (b) $y=4$ at $l_1$.  With manual reasoning, condition (a)
can only be triggered by input $1$. Condition (b) can be triggered if
$x=2$ or $x= -2$ before $l_1$. For each case, we reason backward
across the two branches at $l_0$ and then merge the results.
Eventually we have $\{-3,1,2\}$ as the test inputs to generate.






\paragraph{}
Our $\ME$ solution follows: As before, we introduce a global variable $\myr$ to estimate how far a program 
input is from triggering a boundary condition.  We inject the
assignment
\begin{align}
\myr = \myr * d(==, a,b)
\end{align}
before each condition $a~op~b$, where function $d$ is the branch
distance defined in Def.~\ref{def:overview:bd}.
Fig.~\ref{fig:overview:bva} illustrates $\FOOI$ and $\FOOR$. Then we
can solve the boundary value analysis problem via minimizing \FOOR.
The correctness of this procedure follows:
\begin{cor} 
  An input $x$ triggers a boundary condition if and only if $x$
  minimizes \FOOR.
\label{lem:overview:bva}
\end{cor}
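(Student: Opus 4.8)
The plan is to reuse the template of Cor.~\ref{cor:theory:path_reachability2}: I would first verify that the instrumented driver \FOOR is a representing function for the search problem $(\myX,\myU)$ with $\myX=\{x\mid x\text{ triggers a boundary condition}\}$ and $\myU=\dom{\FOO}$, and then appeal to Thm.~\ref{thm:theory:rp}.

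For the first part, fix an input $x$ and let $[l_0,\ldots,l_n]$ be the path that $\FOO$ exercises on $x$, where the Boolean condition at $l_i$ is $a_i~\myop_i~b_i$, with $a_i,b_i$ evaluated at the moment control reaches $l_i$. Since \FOOR initializes $\myr$ to $1$ and each injected statement replaces $\myr$ by $\myr\cdot d(==,a_i,b_i)=\myr\cdot(a_i-b_i)^2$, unfolding the instrumentation along this path yields the closed form $\FOOR(x)=\prod_{i=0}^{n}(a_i-b_i)^2$ (with the convention that an empty product, \ie when $\FOO$ exercises no conditional on $x$, equals $1$). From this, the three conditions of Def.~\ref{def:theory:rp} are immediate: (a) $\FOOR(x)\geq 0$, being a product of squares; (b) if $\FOOR(x)=0$ then some factor vanishes, hence $a_i==b_i$ holds on the exercised path by Eq.~\eqref{eq:overview:brDist}, so $x$ triggers a boundary condition and $x\in\myX$; and (c) conversely, if $x\in\myX$ then $a_i==b_i$ at some $l_i$ on the path exercised by $\FOO$ on $x$, so the corresponding factor is $0$ and $\FOOR(x)=0$.

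Once \FOOR is known to be a representing function, I would close exactly as in Cor.~\ref{cor:theory:path_reachability2}: the global minimum $\repf^{*}$ of \FOOR equals $0$ (witnessed, \eg, by the inputs $\{-3,1,2\}$ of the running example), so Thm.~\ref{thm:theory:rp}(a) gives $\myX\neq\emptyset$, and then Thm.~\ref{thm:theory:rp}(b) yields, for every $x\in\myU$, that $x$ minimizes \FOOR iff $x\in\myX$, that is, iff $x$ triggers a boundary condition.

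The delicate step is the claim $\repf^{*}=0$, equivalently $\myX\neq\emptyset$, which Thm.~\ref{thm:theory:rp}(b) requires and without which the stated ``iff'' literally fails (any strictly positive function is a representing function for $(\emptyset,\myU)$, yet its minimum points, if any, lie outside the empty $\myX$). I would note that this does \emph{not} follow from the feasibility assumption Def.~\ref{def:overview:branchcoverage}(c): a comparison $a<b$ can have both branches feasible while $a==b$ is never attained, since program and floating-point semantics need not be continuous. Accordingly, as already in Cor.~\ref{cor:theory:path_reachability2}, the statement is to be read under the hypothesis that the boundary conditions in question are reachable; for the program of Fig.~\ref{fig:overview:ME_illustration} this is the case, with $\myX=\{-3,1,2\}$.
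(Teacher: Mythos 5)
Your proposal is correct and follows essentially the same route as the paper's proof: it verifies that \FOOR satisfies the three conditions of Def.~\ref{def:theory:rp} (non-negativity, roots imply boundary conditions, boundary conditions imply roots) and then invokes Thm.~\ref{thm:theory:rp}, exactly as the paper does. Your explicit product-of-squares closed form and your flagging of the $\myX\neq\emptyset$ hypothesis merely fill in details the paper leaves implicit (the latter is handled the same way in the paper's proof of Cor.~\ref{cor:theory:path_reachability2}, which this proof is stated to mirror).
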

\begin{proof}
  It can be shown that (1) $\myr$ is always assigned to a non-negative
  value; (2) if $\myr=0$ at the end of the program execution, then one
  of the boundary conditions must hold; and (3) $\myr=0$ at the end of
  the program execution if at least one of the boundary conditions
  holds.  Thus, the constructed \FOOR satisfies the conditions for
  being a representing function (Def.~\ref{def:theory:rp}). We can
  then prove the corollary following Thm.~\ref{thm:theory:rp} (in the
  same way as we prove Cor.~\ref{cor:theory:path_reachability2}).
\end{proof}


\subsubsection{Satisfiability Checking}
\label{sect:overview:sat}

 Consider the  satisfiability problem with constraint
\begin{align}\label{eq:theory:sat2}
\pi=2^x\leq 5 \land x^2\geq 5\land x\geq 0
\end{align}
where $x\in\Real$.  If we write $x^*\models \pi$ to mean that  $\pi$ becomes a tautology by substituting its free variable $x$ with $x^*$.  Then, this satisfiability  problem can be expressed as the search problem 
$(\{x^*\mid x^*\models \pi\}, \Real)$.
Its representing function can be defined as (Fig.~\ref{fig:theory:sat}): 
\begin{align}\label{eq:theory:sat}
  \begin{split}
    \repf\mydef\lambda x. (2^x\leq 5)~?~ 0 : (2^x - 5)^2 &~+~(x^2\geq 5)~?~ 0 : (x^2-5)^2 \\
    &~+~ (x\geq 0)~?~0:x^2.
  \end{split}
\end{align}
Then we can solve the satisfiability checking problem of constraint
$\pi$ in Eq.~\eqref{eq:theory:sat2} by minimizing its representing
function $\repf$ in Eq.~\eqref{eq:theory:sat}.  The \ME procedure can
easily locate the minimum points in between $\sqrt{5}$
($\approx 2.24$) and $\log_2 5$ ($\approx 2.32$). Each of the minimum
points is necessarily a model of $\pi$ following
Thm.~\ref{thm:theory:rp}. The correctness of such an approach follows:

\begin{cor}
$x^*\models \pi \Leftrightarrow x^* \text{ minimizes } \repf$.
\end{cor}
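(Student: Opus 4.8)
The plan is to mirror exactly the template used for Cor.~\ref{cor:theory:path_reachability2} and Cor.~\ref{lem:overview:bva}: first show that the function $\repf$ displayed in Eq.~\eqref{eq:theory:sat} is a representing function for the search problem $(\{x^*\mid x^*\models\pi\},\Real)$ in the sense of Def.~\ref{def:theory:rp}, and then invoke Thm.~\ref{thm:theory:rp}(b). So the three conditions Def.~\ref{def:theory:rp}(a--c) are to be checked one at a time. Condition (a) is immediate: $\repf$ is a sum of three guarded subterms, each of the shape $(\text{guard})\,?\,0:(\cdot)^2$, so each subterm is either $0$ or a square and hence non-negative, and a sum of non-negative reals is non-negative; thus $\repf(x)\geq 0$ for all $x\in\Real$.

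The substantive part is conditions (b) and (c). For (c), if $x^*\models\pi$ then all three conjuncts $2^{x^*}\leq 5$, $(x^*)^2\geq 5$, and $x^*\geq 0$ hold, so every guard in Eq.~\eqref{eq:theory:sat} evaluates to true, every subterm contributes $0$, and $\repf(x^*)=0$. For (b), suppose $\repf(x^*)=0$; since $\repf(x^*)$ is a sum of three non-negative subterms, each must be $0$, and I then argue that each subterm $(\text{guard})\,?\,0:(e)^2$ can equal $0$ only when its guard holds. Indeed, if a guard fails then the associated $e$ is bounded strictly away from $0$: $2^{x^*}\leq 5$ failing gives $2^{x^*}>5$ so $(2^{x^*}-5)^2>0$; $(x^*)^2\geq 5$ failing gives $(x^*)^2<5$ so $((x^*)^2-5)^2>0$; and $x^*\geq 0$ failing gives $x^*<0$ so $(x^*)^2>0$ — each contradicting that the subterm is $0$. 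Hence all three guards hold, i.e.\ $x^*\models\pi$, which establishes (b).

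Having verified Def.~\ref{def:theory:rp}(a--c), $\repf$ is a representing function, and it remains to confirm $\myX\neq\emptyset$ before applying Thm.~\ref{thm:theory:rp}(b) (the theorem fails without this hypothesis, as the remark following Thm.~\ref{thm:theory:rp} notes). Non-emptiness is clear: any $x^*$ in $[\sqrt 5,\log_2 5]$ — for instance $x^*=\sqrt 5$, where $2^{\sqrt 5}\approx 4.71\leq 5$, $(\sqrt 5)^2=5\geq 5$, and $\sqrt 5\geq 0$ — satisfies $\pi$. Thm.~\ref{thm:theory:rp}(b) then yields $x^*\models\pi\Leftrightarrow x^*\text{ minimizes }\repf$, as claimed. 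There is no genuine obstacle in this corollary; the one point that warrants a moment of care is the ``only if'' half of Def.~\ref{def:theory:rp}(b), where one must observe that each squared penalty vanishes exactly on the closure of the region where its guard holds — there is no gap between, say, $2^x=5$ and $2^x\leq 5$ — so a zero penalty really does force the guard. This is precisely why the construction can use bare squared penalties for the $\leq$/$\geq$ atoms with no additive $\epsilon$, in contrast to the strict-inequality and disequality cases of Def.~\ref{def:overview:bd}.
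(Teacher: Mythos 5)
Your proof is correct and follows exactly the route the paper intends (and spells out for the analogous path-reachability and boundary-value corollaries): verify Def.~\ref{def:theory:rp}(a--c) for the $\repf$ of Eq.~\eqref{eq:theory:sat}, note $\myX\neq\emptyset$ via a witness in $[\sqrt{5},\log_2 5]$, and conclude by Thm.~\ref{thm:theory:rp}(b). The paper leaves these checks implicit; you have merely filled them in, and your closing observation about why the non-strict atoms need no additive $\epsilon$ is accurate.
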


\begin{figure}[!t] \centering
  \includegraphics[width=0.6\linewidth]{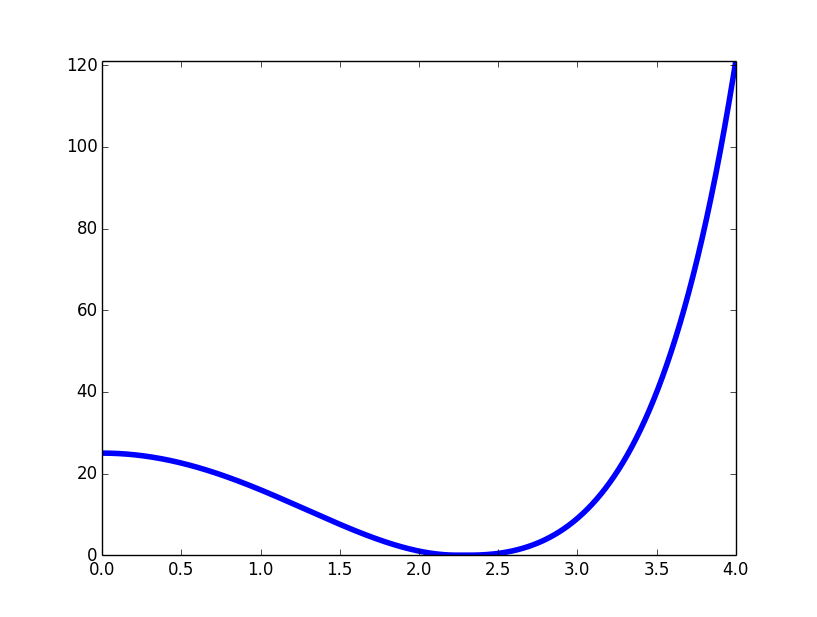}
  \caption{Satisfiability checking via $\ME$. This goal of this example is to find a model of 
$\pi$ defined in Eq.~\eqref{eq:theory:sat2}.
The curve depicts its representing function defined in Eq.~\eqref{eq:theory:sat}.
     }\label{fig:theory:sat}
\end{figure}

\begin{remark}
  The representing function in Eq.~\eqref{eq:theory:sat} is defined
  on $\Real$. It is to illustrate the concept and allows us to ignore
  issues about floating-point inaccuracy. A more realistic
  representing function for satisfiability checking has been shown in
  a recent work~\cite{xsat} where the 
  representing function is constructed based on the binary forms of floating-point
  numbers to avoid floating-point arithmetic in the first place.
\end{remark}







\section{Evaluation}
\label{sect:eval}



\subsection{The ME-powered System CoverMe}

We have implemented CoverMe, a proof-of-concept realization for branch
coverage based testing.  \coverme has a modular design and can be
adapted to other automated testing problems, such as path reachability
and boundary value analysis (Sect.~\ref{sect:theory:beyond}).  This
section presents the architecture of \coverme (Fig. \ref{fig:1103}),
in which we identify three layers and their corresponding use cases.
Implementation details are given in Appendix~\ref{sect:implem}.


\Paragraph{Client.} This layer
provides the program under test $\FOO$ and specifies the inputs to
generate $X$. The search problem involved will be
$(X,\dom{\FOO})$. $X$ is usually implicit, \eg, specified via a path
to trigger as in path reachability testing
(Sect.~\ref{sect:overview:path}), and checking $x\in X$ has to be
feasible following Def.~\ref{def:theory:search}.  $\FOO$ is a piece of
code in the LLVM intermediate
representation~\cite{Lattner:2004:LCF:977395.977673} or in any
language that can be transformed to it (such as code in Ada, the C/C++
language family, or Julia). The current \coverme  has been  tested on C, and we require $\dom{\FOO}\subseteq \Real^N$, as the system backend outputs floating-point test data by default (see the ME kernel layer below). 




\Paragraph{Researcher.} This
layer sets two parameters. One is the initial
value of the representing function, denoted by $\myr_0$. It is usually
either $0$ or $1$ from our experience. For example, $\myr_0$ is set to $0$ in path
reachability testing (Sect.~\ref{sect:overview:path}), and $1$ in both coverage-based testing (Sect.~\ref{sect:overview:example}) and
boundary value analysis (Sect.~\ref{sect:overview:bva}). The other parameter is the assignment to
inject before each conditional statement in $\FOO$.  In practice, the
Researcher specifies this code segment in the $\pen$ procedure and
injects $\myr=\pen$, as shown in Sect.~\ref{sect:overview}.







\Paragraph{ME Kernel.} This layer takes as inputs the program $\FOO$ provided by
the Client, $\myr_0$ and $\pen$ set by the Researcher, and operates
the three steps described in Sect.~\ref{sect:overview:example}. It (1) uses
Clang~\cite{clang:web} to inject $\myr$ and $\pen$ into $\FOO$ to
construct $\FOOI$, (2) initializes $\myr$ to $\myr_0$, invokes $\FOOI$
and returns $\myr$ in $\FOOR$, and (3) minimizes $\FOOR$ with
MO. 
As mentioned in Sect.~\ref{sect:theory:algo}, \coverme uses Basinhopping, an off-the-shelf implementation from SciPy optimization package~\cite{scipy:web} as the MO backend.  Basinhopping is then launched from  different starting points  as shown in Algo.~\ref{theory:algo:coverme}, Lines 8-12.  These starting points are randomly generated from the Hypothesis library~\cite{hypothesis:web}.

\begin{figure}[t]
  \centering
\includegraphics[width=1.0\linewidth]{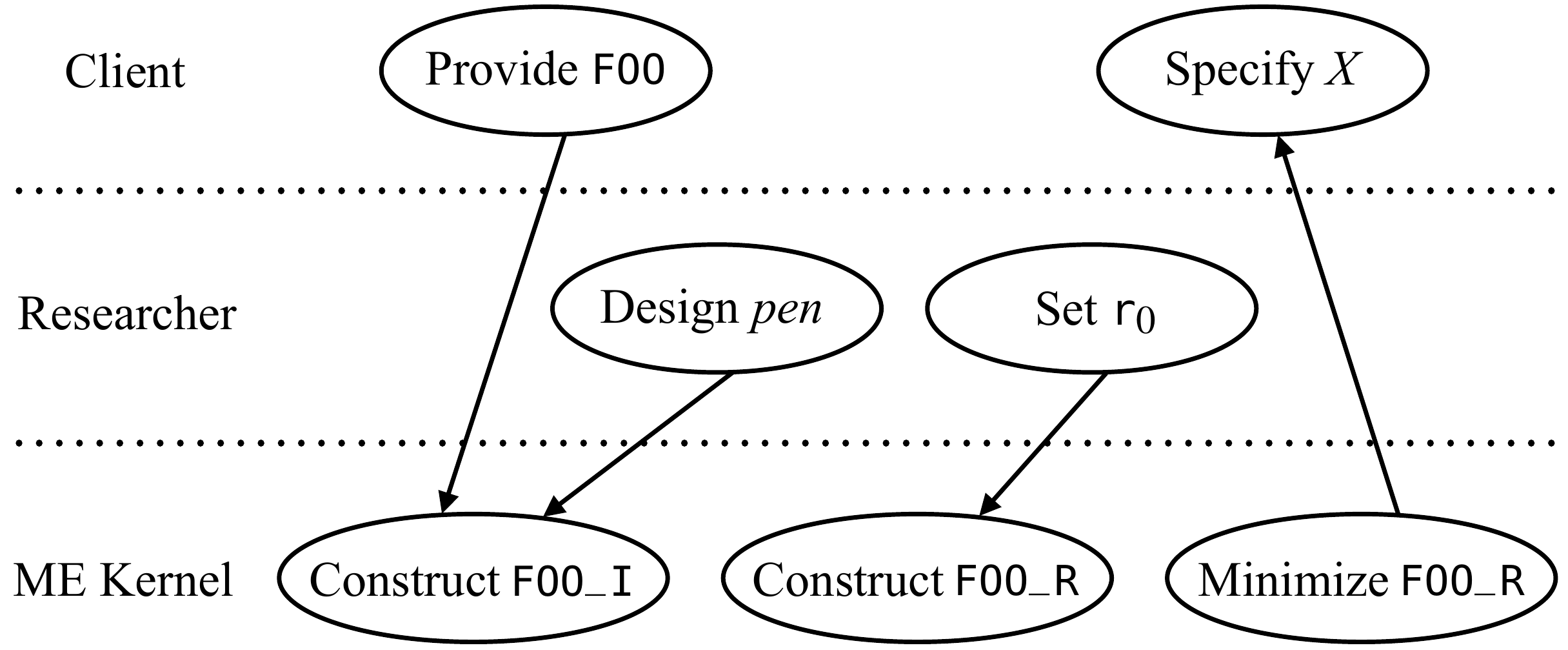}
\caption{Architecture of $\coverme$. Each layer is associated with  typical use cases. $\FOO$: program under
  test; $\myX$:  test inputs to generate;  $\pen$: procedure
  that updates $\myr$; $\myr_0$: initial value of the representing
  function; $\FOOI$: instrumented program; $\FOOR$:  representing function. }
\label{fig:1103}
\end{figure}

\subsection{Experimental Setup}
\label{subsect:expsetup}
This section discusses the benchmarks, the tools for comparison, the settings and hardware, and two evaluation objectives.

\Paragraph{Benchmarks.}
We use the  C math library Fdlibm
5.3~\cite{fdlibm:web} as our benchmark. These  programs
are developed by Sun (now Oracle). They are real-world programs, rich in floating-point operations, and have been used in  Matlab, Java, JavaScript and Android.

Fdlibm includes $80$ programs.  Each has one or multiple entry
functions. In total, Fdlibm has $92$ entry functions. Among them, we
exclude (1) $36$ functions that do not have  branches, (2) $11$ functions
involving non-floating-point input parameters, and (3) $5$ static C
functions. Our benchmark suite includes
\emph{all} remaining $40$ functions in Fdlibm. 
For completeness,
 we list untested functions and reasons why
they are not selected in Appendix~\ref{sect:untested}.



\Paragraph{Compared  Tools.}
We have considered  tools  publicly available to us, including Austin~\cite{lakhotia2013austin}, Pex~\cite{Tillmann:2008:PWB:1792786.1792798}, and 
Klee~\cite{Cadar:2008:KUA:1855741.1855756} and its two variants, namely  Klee-Mulitsolver~\cite{Palikareva:2013:MSS:2526861.2526865}  and
Klee-FP~\cite{collingbourne2011symbolic}.
We tried Klee-Mulitsolver~\cite{Palikareva:2013:MSS:2526861.2526865} with
Z3~\cite{DeMoura:2008:ZES:1792734.1792766} as the SMT backend but found
that the expression language of
Klee~\cite{Cadar:2008:KUA:1855741.1855756} did not support
floating-point constraints. Besides, some common operations in
our benchmark programs, such as pointer reference and dereference,
type casting, external function calls, \etc, are not supported by Z3 
or any other backend solvers compatible with  Klee-Multisolver. Klee-FP supports symbolic reasoning on the equivalence between
floating-point values but does not support coverage testing~\cite{collingbourne2011symbolic}. Pex,
unfortunately, can only run for .NET programs on Windows whereas
Fdlibm is in  C, and our testing platform is based on Unix.

To our best knowledge, Austin~\cite{lakhotia2013austin} is the only
publicly available tool supporting coverage testing for Fdlibm. Austin
combines symbolic execution and search-based heuristics, and has
been thoroughly tested for branch coverage based testing on
floating-point programs~\cite{lakhotia2010empirical}.  For empirical comparison,
we have also implemented a random sampling tool, which samples inputs from the
function's input domains using a standard pseudo-random number
generator. We refer to the tool as Rand.


\Paragraph{Settings and Hardware.}
CoverMe supports the following command line options while generating
test inputs from floating-point programs: (1) the number of
Monte-Carlo iterations $\niter$, (2) the local optimization algorithm
\LM and (3) the number of starting points $\nStart$.  These options
correspond to the three input parameters in
Algo.~\ref{theory:algo:coverme}.  We set $\niter=5$, $\nStart=500$,
and $\LM$=``powell'' which refers to Powell’s local optimization
algorithm~\cite{Press:2007:NRE:1403886}.

For both Austin and CoverMe, we use the default settings for running the benchmarks. 
All experiments were performed on a laptop with a 2.6 GHz Intel Core i7 and 4GB RAM running Ubuntu 14.04
virtual machine.

\Paragraph{Evaluation Objectives.}

\setlength{\tabcolsep}{2pt}
\begin{table} \footnotesize
	\caption {Gcov metrics and explanations}\label{tab:algo:gcov}
	\begin{tabular}{p{0.625in} p{0.7in} p{1in} p{0.7in}}
		\toprule
		Metrics &  \mbox{Gcov} message & Description & Note \\
		\midrule
		Line\% & Lines executed & Covered source lines over total source lines & \mbox{\aka line or} \mbox{statement coverage} \\\arrayrulecolor{lightgray}\hline
		Condition\% & Branches \mbox{executed} & \mbox{Covered conditional} \mbox{statements over total} conditional statements \\\arrayrulecolor{lightgray}\hline
		Branch\% & Branches taken at least once & Covered branches over total branches & \aka \mbox{branch coverage} \\\arrayrulecolor{lightgray}\hline
		Call\% & Calls executed & \mbox{Covered calls over} \mbox{total} calls  \\
		\bottomrule
	\end{tabular}
\end{table}

There are two specific evaluation objectives. (1) \emph{Coverage}: We use
the standard Gnu coverage tool Gcov~\cite{gcov:web} to analyze the
coverage. Gcov generates four metrics for source code coverage
analysis, which are listed as ``Lines'', ``Conditions'', ``Branches''
and ``Calls'' in Col. 1, Tab.~\ref{tab:algo:gcov}.  Col.~2-4 give the
corresponding Gcov report message, the descriptions and the general
metrics for coverage testing.  (2) \emph{Efficiency}: we measure the wall
time reported by the standard Unix command ``time''.  The timeout
limit for the compared tools is 48 hours.





\subsection{Quantitative Results}

This subsection presents two sets of experimental results. The first validates our approach by comparing it against random
testing. 
The second compares CoverMe with Austin~\cite{lakhotia2013austin}, an open-source tool that combines symbolic execution and search-based strategies.

\subsubsection{CoverMe versus Random Testing}
\setlength{\tabcolsep}{1pt}
\begin{table*}[th]\centering \footnotesize             
  \caption {Comparing Random testing and CoverMe.  The benchmark programs are taken from Fdlibm~\cite{fdlibm:web}. We calculate the coverage percentage using Gcov~\cite{gcov:web}. The metrics of Gcov are explained in Tab.~\ref{tab:algo:gcov}.  
    ``n/a''  in the last two columns indicates that no function call exists in the
    program; lines with ``n/a'' are excluded when calculating the  mean values of the last two columns.
  } \label{tab:eval:me_vs_random}
	\begin{tabular}{llrrr|r|rr|rr|rr|rr}
		\toprule
		\multicolumn{5}{c|}{Benchmark characteristics}&\multicolumn{1}{c|}{Time (s)}&\multicolumn{2}{c|}{Line  (\%)} &\multicolumn{2}{c|}{Condition  (\%)} &\multicolumn{2}{c|}{ Branche  (\%)}&\multicolumn{2}{c}{Call  (\%)}
		\\\arrayrulecolor{lightgray}\hline
		Program & Entry function & \#Line  & \#Branch  & \#Call &  & Rand & CoverMe &  Rand & CoverMe & Rand &  CoverMe  & Rand & CoverMe 
		\\\arrayrulecolor{black}\midrule
		e\_acos.c & ieee754\_acos(double) & 33 & 12 & 0 & 7.8 & 18.2 & 100.0 & 33.3 & 100.0 & 16.7 & 100.0 & n/a & n/a \\\arrayrulecolor{lightgray}\hline
		e\_acosh.c & ieee754\_acosh(double) & 15 & 10 & 2 & 2.3 & 46.7 & 93.3 & 60.0 & 100.0 & 40.0 & 90.0 & 50.0 & 100.0 \\\arrayrulecolor{lightgray}\hline
		e\_asin.c & ieee754\_asin(double) & 31 & 14 & 0 & 8.0 & 19.4 & 100.0 & 28.6 & 100.0 & 14.3 & 92.9 & n/a & n/a \\\arrayrulecolor{lightgray}\hline
		e\_atan2.c & ieee754\_atan2(double, double) & 39 & 44 & 0 & 17.4 & 59.0 & 79.5 & 54.6 & 84.1 & 34.1 & 63.6 & n/a & n/a \\\arrayrulecolor{lightgray}\hline
		e\_atanh.c & ieee754\_atanh(double) & 15 & 12 & 0 & 8.1 & 40.0 & 100.0 & 16.7 & 100.0 & 8.8 & 91.7 & n/a & n/a \\\arrayrulecolor{lightgray}\hline
		e\_cosh.c & ieee754\_cosh(double) & 20 & 16 & 3 & 8.2 & 50.0 & 100.0 & 75.0 & 100.0 & 37.5 & 93.8 & 0.0 & 100.0 \\\arrayrulecolor{lightgray}\hline
		e\_exp.c & ieee754\_exp(double) & 31 & 24 & 0 & 8.4 & 25.8 & 96.8 & 33.3 & 100.0 & 20.8 & 96.7 & n/a & n/a \\\arrayrulecolor{lightgray}\hline
		e\_fmod.c & ieee754\_frmod(double, double) & 70 & 60 & 0 & 22.1 & 54.3 & 77.1 & 66.7 & 80.0 & 48.3 & 70.0 & n/a & n/a \\\arrayrulecolor{lightgray}\hline
		e\_hypot.c & ieee754\_hypot(double, double) & 50 & 22 & 0 & 15.6 & 66.0 & 100.0 & 63.6 & 100.0 & 40.9 & 90.9 & n/a & n/a \\\arrayrulecolor{lightgray}\hline
		e\_j0.c & ieee754\_j0(double) & 29 & 18 & 2 & 9.0 & 55.2 & 100.0 & 55.6 & 100.0 & 33.3 & 94.4 & 0.0 & 100.0 \\\arrayrulecolor{lightgray}\hline
		& ieee754\_y0(double) & 26 & 16 & 5 & 0.7 & 69.2 & 100.0 & 87.5 & 100.0 & 56.3 & 100.0 & 0.0 & 100.0 \\\arrayrulecolor{lightgray}\hline
		e\_j1.c & ieee754\_j1(double) & 26 & 16 & 2 & 10.2 & 65.4 & 100.0 & 75.0 & 100.0 & 50.0 & 93.8 & 0.0 & 100.0 \\\arrayrulecolor{lightgray}\hline
		& ieee754\_y1(double) & 26 & 16 & 4 & 0.7 & 69.2 & 100.0 & 87.5 & 100.0 & 56.3 & 100.0 & 0.0 & 100.0 \\\arrayrulecolor{lightgray}\hline
		e\_log.c & ieee754\_log(double) & 39 & 22 & 0 & 3.4 & 87.7 & 100.0 & 90.9 & 100.0 & 59.1 & 90.9 & n/a & n/a \\\arrayrulecolor{lightgray}\hline
		e\_log10.c & ieee754\_log10(double) & 18 & 8 & 1 & 1.1 & 83.3 & 100.0 & 100.0 & 100.0 & 62.5 & 87.5 & 100.0 & 100.0 \\\arrayrulecolor{lightgray}\hline
		e\_pow.c & ieee754\_pow(double, double) & 139 & 114 & 0 & 18.8 & 15.8 & 92.7 & 28.1 & 92.7 & 15.8 & 81.6 & n/a & n/a \\\arrayrulecolor{lightgray}\hline
		e\_rem\_pio2.c & ieee754\_rem\_pio2(double, double*) & 64 & 30 & 1 & 1.1 & 29.7 & 92.2 & 46.7 & 100.0 & 33.3 & 93.3 & 100.0 & 100.0 \\\arrayrulecolor{lightgray}\hline
		e\_remainder.c & ieee754\_remainder(double, double) & 27 & 22 & 1 & 2.2 & 77.8 & 100.0 & 72.7 & 100.0 & 45.5 & 100.0 & 100.0 & 100.0 \\\arrayrulecolor{lightgray}\hline
		e\_scalb.c & ieee754\_scalb(double, double) & 9 & 14 & 0 & 8.5 & 66.7 & 100.0 & 85.7 & 100.0 & 50.0 & 92.9 & n/a & n/a \\\arrayrulecolor{lightgray}\hline
		e\_sinh.c & ieee754\_sinh(double) & 19 & 20 & 2 & 0.6 & 57.9 & 100.0 & 60.0 & 100.0 & 35.0 & 95.0 & 0.0 & 100.0 \\\arrayrulecolor{lightgray}\hline
		e\_sqrt.c & ieee754\_sqrt(double) & 68 & 46 & 0 & 15.6 & 85.3 & 94.1 & 87.0 & 92.7 & 69.6 & 82.6 & n/a & n/a \\\arrayrulecolor{lightgray}\hline
		k\_cos.c & kernel\_cos(double, double) & 15 & 8 & 0 & 15.4 & 73.3 & 100.0 & 75.0 & 100.0 & 37.5 & 87.5 & n/a & n/a \\\arrayrulecolor{lightgray}\hline
		s\_asinh.c & asinh(double) & 14 & 12 & 2 & 8.4 & 57.1 & 100.0 & 66.7 & 100.0 & 41.7 & 91.7 & 50.0 & 100.0 \\\arrayrulecolor{lightgray}\hline
		s\_atan.c & atan(double) & 28 & 26 & 0 & 8.5 & 25.0 & 96.4 & 30.8 & 100.0 & 19.2 & 88.5 & n/a & n/a \\\arrayrulecolor{lightgray}\hline
		s\_cbrt.c & cbrt(double) & 24 & 6 & 0 & 0.4 & 87.5 & 91.7 & 100.0 & 100.0 & 50.0 & 83.3 & n/a & n/a \\\arrayrulecolor{lightgray}\hline
		s\_ceil.c & ceil(double) & 29 & 30 & 0 & 8.8 & 27.6 & 100.0 & 20.0 & 100.0 & 10.0 & 83.3 & n/a & n/a \\\arrayrulecolor{lightgray}\hline
		s\_cos.c & cos (double) & 12 & 8 & 6 & 0.4 & 100.0 & 100.0 & 100.0 & 100.0 & 75.0 & 100.0 & 83.3 & 100.0 \\\arrayrulecolor{lightgray}\hline
		s\_erf.c & erf(double) & 38 & 20 & 2 & 9.0 & 21.1 & 100.0 & 50.0 & 100.0 & 30.0 & 100.0 & 0.0 & 100.0 \\\arrayrulecolor{lightgray}\hline
		& erfc(double) & 43 & 24 & 2 & 0.1 & 18.6 & 100.0 & 41.7 & 100.0 & 25.0 & 100.0 & 0.0 & 100.0 \\\arrayrulecolor{lightgray}\hline
		s\_expm1.c & expm1(double) & 56 & 42 & 0 & 1.1 & 21.4 & 100.0 & 33.3 & 100.0 & 21.4 & 97.6 & n/a & n/a \\\arrayrulecolor{lightgray}\hline
		s\_floor.c & floor(double) & 30 & 30 & 0 & 10.1 & 26.7 & 100.0 & 20.0 & 100.0 & 10.0 & 83.3 & n/a & n/a \\\arrayrulecolor{lightgray}\hline
		s\_ilogb.c & ilogb(double) & 12 & 12 & 0 & 8.3 & 33.3 & 91.7 & 33.3 & 83.3 & 16.7 & 75.0 & n/a & n/a \\\arrayrulecolor{lightgray}\hline
		s\_log1p.c & log1p(double) & 46 & 36 & 0 & 9.9 & 71.7 & 100.0 & 61.1 & 100.0 & 38.9 & 88.9 & n/a & n/a \\\arrayrulecolor{lightgray}\hline
		s\_logb.c & logb(double) & 8 & 6 & 0 & 0.3 & 87.5 & 87.5 & 100.0 & 100.0 & 50.0 & 83.3 & n/a & n/a \\\arrayrulecolor{lightgray}\hline
		s\_modf.c & modf(double, double*) & 32 & 10 & 0 & 3.5 & 31.2 & 100.0 & 46.7 & 100.0 & 33.3 & 100.0 & n/a & n/a \\\arrayrulecolor{lightgray}\hline
		s\_nextafter.c & nextafter(double, double) & 36 & 44 & 0 & 17.5 & 72.2 & 88.9 & 81.8 & 95.5 & 59.1 & 79.6 & n/a & n/a \\\arrayrulecolor{lightgray}\hline
		s\_rint.c & rint(double) & 34 & 20 & 0 & 3.0 & 26.5 & 100.0 & 30.0 & 100.0 & 15.0 & 90.0 & n/a & n/a \\\arrayrulecolor{lightgray}\hline
		s\_sin.c & sin (double) & 12 & 8 & 6 & 0.3 & 100.0 & 100.0 & 100.0 & 100.0 & 75.0 & 100.0 & 83.3 & 100.0 \\\arrayrulecolor{lightgray}\hline
		s\_tan.c & tan(double) & 6 & 4 & 3 & 0.3 & 100.0 & 100.0 & 100.0 & 100.0 & 50.0 & 100.0 & 66.7 & 100.0 \\\arrayrulecolor{lightgray}\hline
		s\_tanh.c & tanh(double) & 16 & 12 & 0 & 0.7 & 43.8 & 100.0 & 50.0 & 100.0 & 33.3 & 100.0 & n/a & n/a \\\arrayrulecolor{black}\midrule
		MEAN &  &  &  &  & 6.9 & 54.2 & 97.0 & 61.2 & 98.2 & 38.0 & 90.8 & 39.6 & 100.0 \\\arrayrulecolor{black}\bottomrule
	\end{tabular}
\end{table*}

We have compared CoverMe with Rand by running them on the benchmarks
described in Sect.~\ref{subsect:expsetup}. In
Tab.~\ref{tab:eval:me_vs_random}, we sort all benchmark programs
(Col.~1) and entry functions (Col.~2) by their names, give the numbers
of source lines, branches and invoked functions (Col.~3-5). All
coverage results are given by Gcov. It can be seen that the programs
are rich in branches. The largest number of branches is 114 ({\tt
  ieee754\_pow}).  Especially, some tested functions have more
branches than lines. For example, {\tt ieee754\_atan2} has 44
branches, but only 39 lines; {\tt ceil} has 30 branches, but only 29
lines.


The times used by CoverMe are given by Col.~6 in
Tab.~\ref{tab:eval:me_vs_random}.  Observe that the times used by
CoverMe vary considerably through all entry functions, from 0.1 to
22.1 seconds, but all under half a minute. We find the numbers of
lines (Col.~3) and branches (Col.~4) are not correlated with the
running times (Col.~6). CoverMe takes 1.1 seconds to run the function
{\tt expm1} (with 42 branches and 56 lines) and 10.1 seconds to run
the function {\tt floor} (with 30 branches and 30 lines). It shows the
potential for real-world program testing since CoverMe may not be very
sensitive to the number of lines or branches.  We set the timeout limit
of Rand as 600 seconds, since Rand does not terminate by
itself and 600 seconds are already larger than the times spent by
CoverMe by orders of magnitude.


Col.~7-14 in Tab.~\ref{tab:eval:me_vs_random} show the coverage results of Rand and CoverMe.
We use Gcov to compute four metrics: lines, condition, branches and calls (as defined in Tab.~\ref{tab:algo:gcov}).  All coverage results reported
by Gcov have been included in Tab.~\ref{tab:eval:me_vs_random} for completeness. The columns ``Line'' and ``Branch'' refer to the commonly used line/statement coverage and branch coverage,
respectively.  The average values of the coverage are shown in the last row of the table. All values in Col.~8, 10, 12 and 14 are larger than or equal to the corresponding values in Col.~7, 9, 11 and 13. It means that CoverMe
achieves higher coverage  than Rand for every benchmark program and every metric.  For the line and condition coverages, CoverMe
achieves full coverage for almost all benchmarks, with 97.0\% line coverage and 
98.2\% condition coverage on average, whereas Rand achieves 54.2\%
and 61.2\% for the two metrics. For the most important branch coverage (since it is a primary target of this paper), CoverMe achieves 100\%
coverage for $11$ out of $40$ benchmarks with an average of 90.8\% coverage, while Rand does not achieve any 100\% coverage and attains only 38.0\% coverage on average. CoverMe achieves 100\% call coverage on average, whereas Rand achieves 39.5\%. 

\subsubsection{CoverMe versus Austin}

\setlength{\tabcolsep}{3.5pt}
\begin{table*} \centering\footnotesize
	\caption{Comparison of Austin~\cite{lakhotia2013austin}  and CoverMe.  The benchmark programs are taken from the Fdlibm library~\cite{fdlibm:web}.}\label{tab:eval:me_vs_austin}
	\begin{tabular}{ll|rr|rr|rr}
		\toprule
		\multicolumn{2}{c|}{Benchmark}&\multicolumn{2}{c|}{Time (second)}&\multicolumn{2}{c|}{Branch coverage(\%)} &\multicolumn{2}{c}{Improvement metrics}
		\\\arrayrulecolor{lightgray}\hline
		Program & Entry function  & Austin & CoverMe & Austin & CoverMe & Speedup &  Coverage (\%) \\\arrayrulecolor{black}\midrule
		e\_acos.c & ieee754\_acos(double) & 6058.8 & 7.8 & 16.7 & 100.0 & 776.4 & 83.3 \\\arrayrulecolor{lightgray}\hline
		e\_acosh.c & ieee754\_acosh(double) & 2016.4 & 2.3 & 40.0 & 90.0 & 887.5 & 50.0 \\\arrayrulecolor{lightgray}\hline
		e\_asin.c & ieee754\_asin(double) & 6935.6 & 8.0 & 14.3 & 92.9 & 867.0 & 78.6 \\\arrayrulecolor{lightgray}\hline
		e\_atan2.c & ieee754\_atan2(double, double) & 14456.0 & 17.4 & 34.1 & 63.6 & 831.2 & 29.6 \\\arrayrulecolor{lightgray}\hline
		e\_atanh.c & ieee754\_atanh(double) & 4033.8 & 8.1 & 8.3 & 91.7 & 495.4 & 83.3 \\\arrayrulecolor{lightgray}\hline
		e\_cosh.c & ieee754\_cosh(double) & 27334.5 & 8.2 & 37.5 & 93.8 & 3327.7 & 56.3 \\\arrayrulecolor{lightgray}\hline
		e\_exp.c & ieee754\_exp(double) & 2952.1 & 8.4 & 75.0 & 96.7 & 349.7 & 21.7 \\\arrayrulecolor{lightgray}\hline
		e\_fmod.c & ieee754\_frmod(double, double) & timeout & 22.1 & n/a & 70.0 & n/a & n/a \\\arrayrulecolor{lightgray}\hline
		e\_hypot.c & ieee754\_hypot(double, double) & 5456.8 & 15.6 & 36.4 & 90.9 & 350.9 & 54.6 \\\arrayrulecolor{lightgray}\hline
		e\_j0.c & ieee754\_j0(double) & 6973.0 & 9.0 & 33.3 & 94.4 & 776.5 & 61.1 \\\arrayrulecolor{lightgray}\hline
		& ieee754\_y0(double) & 5838.3 & 0.7 & 56.3 & 100.0 & 8243.5 & 43.8 \\\arrayrulecolor{lightgray}\hline
		e\_j1.c & ieee754\_j1(double) & 4131.6 & 10.2 & 50.0 & 93.8 & 403.9 & 43.8 \\\arrayrulecolor{lightgray}\hline
		& ieee754\_y1(double) & 5701.7 & 0.7 & 56.3 & 100.0 & 8411.0 & 43.8 \\\arrayrulecolor{lightgray}\hline
		e\_log.c & ieee754\_log(double) & 5109.0 & 3.4 & 59.1 & 90.9 & 1481.9 & 31.8 \\\arrayrulecolor{lightgray}\hline
		e\_log10.c & ieee754\_log10(double) & 1175.5 & 1.1 & 62.5 & 87.5 & 1061.3 & 25.0 \\\arrayrulecolor{lightgray}\hline
		e\_pow.c & ieee754\_pow(double, double) & timeout & 18.8 & n/a & 81.6 & n/a & n/a \\\arrayrulecolor{lightgray}\hline
		e\_rem\_pio2.c & ieee754\_rem\_pio2(double, double*) & timeout & 1.1 & n/a & 93.3 & n/a & n/a \\\arrayrulecolor{lightgray}\hline
		e\_remainder.c & ieee754\_remainder(double, double) & 4629.0 & 2.2 & 45.5 & 100.0 & 2146.5 & 54.6 \\\arrayrulecolor{lightgray}\hline
		e\_scalb.c & ieee754\_scalb(double, double) & 1989.8 & 8.5 & 57.1 & 92.9 & 233.8 & 35.7 \\\arrayrulecolor{lightgray}\hline
		e\_sinh.c & ieee754\_sinh(double) & 5534.8 & 0.6 & 35.0 & 95.0 & 9695.9 & 60.0 \\\arrayrulecolor{lightgray}\hline
		e\_sqrt.c & iddd754\_sqrt(double) & crash & 15.6 & n/a & 82.6 & n/a & n/a \\\arrayrulecolor{lightgray}\hline
		k\_cos.c & kernel\_cos(double, double) & 1885.1 & 15.4 & 37.5 & 87.5 & 122.6 & 50.0 \\\arrayrulecolor{lightgray}\hline
		s\_asinh.c & asinh(double) & 2439.1 & 8.4 & 41.7 & 91.7 & 290.8 & 50.0 \\\arrayrulecolor{lightgray}\hline
		s\_atan.c & atan(double) & 7584.7 & 8.5 & 26.9 & 88.5 & 890.6 & 61.6 \\\arrayrulecolor{lightgray}\hline
		s\_cbrt.c & cbrt(double) & 3583.4 & 0.4 & 50.0 & 83.3 & 9109.4 & 33.3 \\\arrayrulecolor{lightgray}\hline
		s\_ceil.c & ceil(double) & 7166.3 & 8.8 & 36.7 & 83.3 & 812.3 & 46.7 \\\arrayrulecolor{lightgray}\hline
		s\_cos.c & cos (double) & 669.4 & 0.4 & 75.0 & 100.0 & 1601.6 & 25.0 \\\arrayrulecolor{lightgray}\hline
		s\_erf.c & erf(double) & 28419.8 & 9.0 & 30.0 & 100.0 & 3166.8 & 70.0 \\\arrayrulecolor{lightgray}\hline
		& erfc(double) & 6611.8 & 0.1 & 25.0 & 100.0 & 62020.9 & 75.0 \\\arrayrulecolor{lightgray}\hline
		s\_expm1.c & expm1(double) & timeout & 1.1 & n/a & 97.6 & n/a & n/a \\\arrayrulecolor{lightgray}\hline
		s\_floor.c & floor(double) & 7620.6 & 10.1 & 36.7 & 83.3 & 757.8 & 46.7 \\\arrayrulecolor{lightgray}\hline
		s\_ilogb.c & ilogb(double) & 3654.7 & 8.3 & 16.7 & 75.0 & 438.7 & 58.3 \\\arrayrulecolor{lightgray}\hline
		s\_log1p.c & log1p(double) & 11913.7 & 9.9 & 61.1 & 88.9 & 1205.7 & 27.8 \\\arrayrulecolor{lightgray}\hline
		s\_logb.c & logb(double) & 1064.4 & 0.3 & 50.0 & 83.3 & 3131.8 & 33.3 \\\arrayrulecolor{lightgray}\hline
		s\_modf.c & modf(double, double*) & 1795.1 & 3.5 & 50.0 & 100.0 & 507.0 & 50.0 \\\arrayrulecolor{lightgray}\hline
		s\_nextafter.c & nextafter(double, double) & 7777.3 & 17.5 & 50.0 & 79.6 & 445.4 & 29.6 \\\arrayrulecolor{lightgray}\hline
		s\_rint.c & rint(double) & 5355.8 & 3.0 & 35.0 & 90.0 & 1808.3 & 55.0 \\\arrayrulecolor{lightgray}\hline
		s\_sin.c & sin (double) & 667.1 & 0.3 & 75.0 & 100.0 & 1951.4 & 25.0 \\\arrayrulecolor{lightgray}\hline
		s\_tan.c & tan(double) & 704.2 & 0.3 & 50.0 & 100.0 & 2701.9 & 50.0 \\\arrayrulecolor{lightgray}\hline
		s\_tanh.c & tanh(double) & 2805.5 & 0.7 & 33.3 & 100.0 & 4075.0 & 66.7 \\\arrayrulecolor{black}\midrule
		MEAN &  & 6058.4 & 6.9 & 42.8 & 90.8 &3868.0 & 48.9\\
		\arrayrulecolor{black}\bottomrule
	\end{tabular}
\end{table*}

Tab.~\ref{tab:eval:me_vs_austin} reports the testing results of Austin and CoverMe. It uses the same set of benchmarks as Tab.~\ref{tab:eval:me_vs_random} (Col.~1-2).  We use the time (Col.~3-4) and the branch coverage metric (Col. 5-6) to evaluate the efficiency and the coverage.  We choose branch coverage instead of the other three metrics in Gcov since branch coverage is the major concern in this paper. Besides, Gcov needs to have access to the generated test inputs to report the coverage, but currently, there is no viable way to access the test inputs generated by Austin.  Unlike Rand, the branch coverage percentage of Austin (Col.~5)  is provided by Austin itself, rather than by Gcov.

Austin also shows large performance variances over different benchmarks, from 667.1 seconds (the program {\tt sin}) to hours. As shown in the last row of Tab.~\ref{tab:eval:me_vs_austin}, Austin needs  6058.4 seconds on average for the testing. It should be mentioned that the average time does not include the benchmarks where Austin crashes or times out. Compared with Austin, CoverMe is  faster (Tab.~\ref{tab:eval:me_vs_austin}, Col.~4)  with 6.9 seconds on average (The results are also shown in Tab.~\ref{tab:eval:me_vs_random}(Col.~6)). 

CoverMe achieves a much higher branch coverage (90.8\%) than Austin (42.8\%).  We also compare across Tab.~\ref{tab:eval:me_vs_austin} and Tab.~\ref{tab:eval:me_vs_random}. On average, Austin provides slightly better branch coverage (42.8\%) than Rand (38.0\%). 

Col.~7-8 are the improvement metrics of CoverMe against Austin. The Speedup (Col.~7) is calculated as the ratio of the time spent by Austin and the time spent by CoverMe. The coverage improvement (Col.~7)  is calculated as the difference between the branch coverage of CoverMe and that of Austin.  We observe that CoverMe provides 3,868X speedup and  48.9\% coverage improvement on average. 


\begin{remark}
  Our evaluation shows that CoverMe has achieved high code coverage on
  most tested programs. One may wonder whether the generated inputs
  have triggered any latent bugs. Note that when no specifications are
  given, program crashes have been frequently used as an oracle for
  finding bugs in integer programs.  Floating-point programs, on the
  other hand, can silently produce wrong results without
  crashing. Thus, program crashes cannot be used as a simple, readily
  available oracle as for integer programs. Our experiments,
  therefore, have focused on assessing the efficiency of ME in solving
  the problem defined in Def.~\ref{def:overview:branchcoverage} or
  Lem.~\ref{def:overview:pb}, and do not evaluate its effectiveness in
  finding bugs, which is orthogonal and interesting future work.
\end{remark}


\section{Incompleteness}
\label{sect:discussion}


The development of $\ME$ toward a general solution to the search problem, as illustrated in this presentation, has been grounded in the
concept of representing function and mathematical optimization.  While
our theory guarantees that the $\ME$ procedure solves the search
problem correctly (Cor.~\ref{cor:theory:theoretical}), the practical
incompleteness (Remark~\ref{remark:theory:practical}) remains a challenge
in applying $\ME$ to real-world automated testing problems.  Below we
discuss three sources of practical incompleteness and how it can be mitigated. 

\Paragraph{Well-behaved Representing Functions.}
The \emph{well-behavedness} of the representing function is essential
for the working of $\ME$.  Well-behavedness is a common term in the MO
literature to describe that the MO backend under discussion can efficiently handle the objective function.  A common
requirement, for example, is that the objective function (or the
representing function for $\ME$) should be smooth to some degree.

Consider again the path reachability problem in
Sect.~\ref{sect:overview:path}.  If $\FOOI$ is instrumented as in
Fig.~\ref{fig:theory:discontinuity}(left), the corresponding \FOOR
also satisfies the conditions in Def.~\ref{def:theory:rp}, but this \FOOR is discontinuous at $\{-3,1,2\}$
(Fig.~\ref{fig:theory:discontinuity}(right)). The representing function
changes its value disruptively at its minimum points.  These minimum
points cannot be calculated by MO tools.
\begin{figure}
\begin{minipage}{0.5\linewidth}
\begin{lstlisting}[label={lst:eg1_b}]
void FOO_I(double x){
   @\highlight{r = r + (x <= 1)? 0:1; }@
@$l_0$@: if (x <= 1) x++;
   @\,@double y = square(x);
   @\highlight{r = r + (y == 4)? 0:1 ;}@
@$l_1$@: if (y == 4) ...;
}
\end{lstlisting}%
\end{minipage}
\begin{minipage}{0.49\linewidth}
\includegraphics[width=1\linewidth]{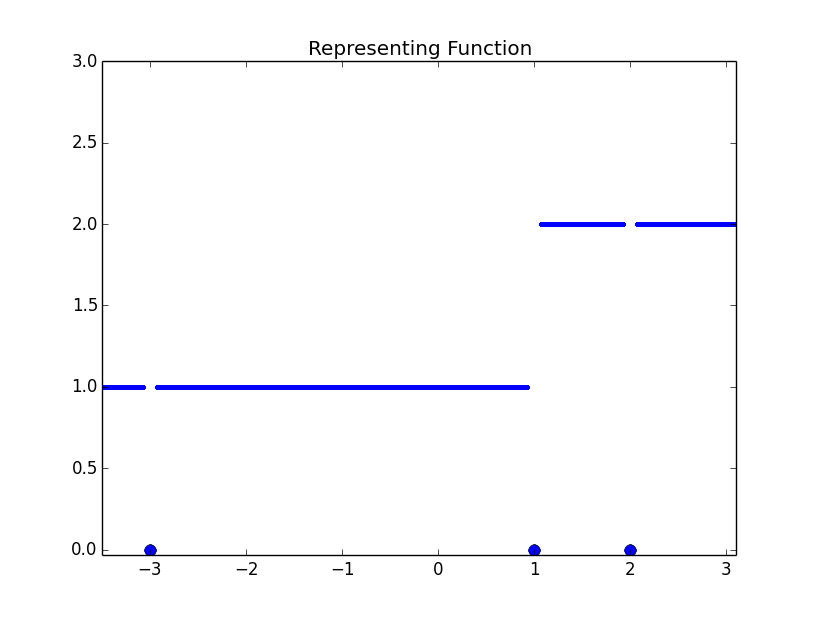}
\end{minipage}
\caption{An example of ill-behaved representing function for the path reachability problem in Sect.~\ref{sect:overview:path}. The left is the instrumented $\FOOI$; the right is the graph of representing function $\FOOR$.}
\label{fig:theory:discontinuity}
\end{figure}
The example also shows that the conditions in
Def.~\ref{def:theory:rp}(a-c) can be insufficient for avoiding
ill-behaved representing functions. However, it is unclear to us
whether there exists a set of conditions that are both easy-to-verify and
sufficient for excluding ill-behaved functions.

\Paragraph{Reliance on Program Execution.}
We have seen that the $\ME$ procedure generates test inputs of program
\FOO by minimizing, and therefore, running, another program
\FOOR. This execution-reliant feature has both benefits and risks.
 It allows us to generate test inputs of complex programs without
analyzing their semantics; it also means that ME can give different
results with different compilers or machines. 

In our experiments, we realize that the feature becomes 
disadvantageous if inaccuracy occurs in the program execution.
Consider the satisfiability testing problem with the constraint
$\pi=x \geq \num{1e-20}$.  Suppose we use the representing function
$\repf_{\pi}(x)= x\geq\num{1e-20}~?~0:(x-\num{1e-20})^2$ and implement
$\repf_{\pi}$ as a {\tt double}-precision floating-point program.  We can verify that $\repf_{\pi}$ is a representing function in the sense of Def.~\ref{def:theory:rp}, but it evaluates to $0$ not only when
$x\geq \num{1e-20}$, but also when $0\leq x<\num{1e-20}$ because the
smallest machine-representable {\tt double} is in the order of
$\num{1E-324}$~\cite{goldberg1991every}.  Then the $\ME$ procedure
may return $x^*=0$ in step M2 and then returns ``not found'' in step M3
because $x^*=0$ is not a model of $\pi$.

This issue described above may be mitigated using arbitrary-precision
arithmetic~\cite{Bailey:2005:HFA:1069590.1069679}. Another option, as
is demonstrated in the XSat solver~\cite{xsat}, is to construct the
representing function based on ULP, the \emph{unit in the last place}
value~\cite{goldberg1991every}, which avoids floating-point
inaccuracy by using the binary representation of floating-points.

\Paragraph{Efficient  MO Backend.}
The inherent intractability of global optimization is another source
of the incompleteness of $\ME$.  Even if the representing function is
well-behaved and free from computational inaccuracy, it is possible
that the MO backend returns sub-optimal or local minimum points due to
weak implementation, high-dimensional or intrinsically difficult
problems, \etc. That said, MO is still in active development, and our
$\ME$ approach has the flexibility to leverage its state-of-the-art,
such as the Basinhopping algorithm used in our experiments.


\section{Related Work}
\label{sect:relwork}
\Paragraph{Two Paradigms for Program Correctness.}
Constructing an axiomatic system~\cite{Hoare:1969:ABC:363235.363259,floyd1967assigning}
is  of central importance in ensuring program correctness. Let $\FOO$
be a program. If we write $\sem{\FOO}$ for the set of $\FOO$'s
possible execution paths (\aka trace semantics~\cite{DBLP:conf/esop/MauborgneR05}),
and $\err$ for the unsafe paths, then the correctness of \FOO is to ensure
\begin{align}
\sem{\FOO} \cap \err = \emptyset. 
\label{eq:relwork:corr}
\end{align}
 The problem is known to be
undecidable, and approximate solutions have been extensively
studied. One is {abstract interpretation}~\cite{DBLP:dblp_conf/popl/CousotC77,Cousot:1979:SDP:567752.567778}, which  systematically constructs
$\sem{\FOO}^{\sharp}\supseteq \sem{\FOO}$, and \emph{prove} Eq.~\eqref{eq:relwork:corr} by proving
$\sem{\FOO}^{\sharp}\cap\err=\emptyset$.~\footnote{The
  relationship between the abstract and concrete semantics is more
  commonly formalized with a {Galois
    connection}~\cite{DBLP:dblp_conf/popl/CousotC77}; we use
  $\sem{\FOO}^{\sharp}\supseteq \sem{\FOO}$ as a simplified case.}
Another category of approximation is {automated testing}. It
attempts to \emph{disprove} Eq.~\eqref{eq:relwork:corr} by generating inputs $x$ such that
$\sem{\FOO}(x) \in \err$, where $\sem{\FOO}(x)$ denotes the path
of executing $\FOO$ with input $x$.  

\Paragraph{Automated Testing.}
Many automated testing approaches adopt 
 \emph{symbolic execution} %
~\cite{King:1976:SEP:360248.360252,Boyer:1975:SFS:800027.808445,Clarke:1976:SGT:1313320.1313532,DBLP:journals/cacm/CadarS13}. 
They repeatedly select a target path $\tau$ and gather the conjunction of
logic conditions along the path, denoted by $\Phi_{\tau}$ (called path condition~\cite{King:1976:SEP:360248.360252}). They then use
SMT solvers to calculate a model of $\Phi_{\tau}$.  These approaches
are both sound and complete in the sense of 
$\FOO(x) \failure ~\Leftrightarrow~ x \models  \Phi_{\tau}$.
Symbolic execution and its variants have seen much progress since the
breakthrough of
SAT/SMT~\cite{Silva:1997:GNS:244522.244560,DBLP:journals/jacm/NieuwenhuisOT06},
but still have difficulties in handling numerical code. There are
two well-known efficiency issues. First, {path explosion} makes it
time-consuming to select a large number of $\tau$ and gather $\Phi_{\tau}$.  Second,
for numerical code, each gathered $\Phi_{\tau}$ involves numerical constraints that
can quickly go beyond the capabilities of modern SAT/SMT solvers.

A large number of search-based heuristics have been proposed to
mitigate issues from symbolic execution. They use fitness functions to
capture path conditions and use numerical optimization 
to minimize/maximize the fitness functions.  A well-developed
search-based testing is Austin~\cite{lakhotia2013austin}, which
combines symbolic execution and search-based heuristics. Austin has
been compared with Cute~\cite{Sen:2005:CCU:1081706.1081750}, a
dynamic symbolic execution tool, and shows to be significantly more
efficient.  These search-based solutions, however, as McMinn points
out in his survey paper~\cite{McMinn:2004:SST:1077276.1077279}, ``are
not standalone algorithms in themselves, but rather strategies ready
for adaption to specific problems.''

\Paragraph{Mathematical Optimization  in Testing and Verification.}
In the seminal work of Miller
\etal~\cite{Miller:1976:AGF:1313320.1313530}, optimization methods are
already used in generating test data for numerical code. These methods
are then taken up in the 1990s by
Koral~\cite{Korel:1990:AST:101747.101755,Ferguson:1996:CAS:226155.226158},
which have found their ways into many mature
implementations~\cite{Lakhotia:2010:FSF:1928028.1928039,Tillmann:2008:PWB:1792786.1792798,
  souza2011coral}. Mathematical optimization has also been employed in
program
verification~\cite{cousot2005proving,roozbehani2005convex}. Liberti
\etal~\cite{liberti2010mathematical,goubault2010static} have proposed
to calculate the invariant of an integer program as the mathematical
optimization problem of mixed integer nonlinear programming
(MINLP)~\cite{tawarmalani2004global}.  Recently, a floating-point
satisfiability solver XSat~\cite{xsat} has been developed. It
constructs a floating-point program $\repf_{\pi}$ from a formula $\pi$
in conjunctive normal form, and then decides the satisfiability of
$\pi$ by checking the sign of the minimum of $\repf_{\pi}$. This
decision procedure is an application of Thm.~\ref{thm:theory:rp}(a),
and it calculates the models of $\pi$ using
Thm.~\ref{thm:theory:rp}(b).  Compared with XSat, our work lays out
the theoretical foundation for a precise, systematic approach to
testing numerical code; XSat is an instance of our proposed $\ME$
procedure (see Example~\ref{eg:theory:sat}(b) and
Sect.~\ref{sect:overview:sat}).

\section{Conclusion}
 \label{sect:conc}


This paper introduces  Mathematical Execution (ME), a
new, unified approach for testing numerical code.  Our insight is to
(1) use a representing function to precisely and uniformly capture the
desired testing objective, and (2) use mathematical optimization to
direct input and program space exploration.  What $\ME$ --- and most
importantly, the accompanying representation function --- provides is
an approach that can handle a variety of automated testing
problems, including coverage-based testing, path reachability testing,
boundary value analysis, and satisfiability checking. We have
implemented a branch coverage testing tool as a proof-of-concept
demonstration of ME's potential. Evaluated on a collection
of Sun's math library (used in Java, JavaScript, Matlab, and Android),
our tool \coverme achieves substantially better coverage results
(near-optimal coverage on all tested programs) when compared to random
testing and Austin (a  coverage-based testing tool that combines symbolic execution and search-based strategies).



\bibliographystyle{abbrvnat}
\bibliography{main}

 \appendix
\section{Untested Programs in Fdlibm}
\label{sect:untested}
 
The programs from the freely distributed math library Fdlibm
5.3~\cite{fdlibm:web} are used as our benchmarks.
Tab.~\ref{tab:appendix:untested} lists all untested programs and
functions, and explains the reason why they are not selected. Three types
of the functions are excluded from our evaluation.  They are (1)
functions without any branch, (2) functions involving
non-floating-point input parameters, and (3) static C functions.

\setlength{\tabcolsep}{6pt}
\begin{table*}\footnotesize
\caption{Untested programs and functions 
in benchmark suite Fdlibm  and  the  corresponding explanations.}
\centering
\begin{tabular}{l  l  |  l}
\toprule
Program & Entry function &  Explanation \\\arrayrulecolor{black}\midrule
e\_gamma\_r.c & ieee754\_gamma\_r(double) & no branch \\\arrayrulecolor{lightgray}\hline
e\_gamma.c & ieee754\_gamma(double) & no branch \\\arrayrulecolor{lightgray}\hline
e\_j0.c & pzero(double) & static C function \\\arrayrulecolor{lightgray}\hline
 & qzero(double) & static C function \\\arrayrulecolor{lightgray}\hline
e\_j1.c & pone(double) & static C function \\\arrayrulecolor{lightgray}\hline
 & qone(double) & static C function \\\arrayrulecolor{lightgray}\hline
e\_jn.c & ieee754\_jn(int, double) & unsupported input type \\\arrayrulecolor{lightgray}\hline
 & ieee754\_yn(int, double) & unsupported input type \\\arrayrulecolor{lightgray}\hline
e\_lgamma\_r.c & sin\_pi(double) & static C function \\\arrayrulecolor{lightgray}\hline
 & ieee754\_lgammar\_r(double, int*) & unsupported input type \\\arrayrulecolor{lightgray}\hline
e\_lgamma.c & ieee754\_lgamma(double) & no branch \\\arrayrulecolor{lightgray}\hline
k\_rem\_pio2.c & kernel\_rem\_pio2(double*, double*, int, int, const int*) & unsupported input type \\\arrayrulecolor{lightgray}\hline
k\_sin.c & kernel\_sin(double, double, int) & unsupported input type \\\arrayrulecolor{lightgray}\hline
k\_standard.c & kernel\_standard(double, double, int) & unsupported input type \\\arrayrulecolor{lightgray}\hline
k\_tan.c & kernel\_tan(double, double, int) & unsupported input type \\\arrayrulecolor{lightgray}\hline
s\_copysign.c & copysign(double) & no branch \\\arrayrulecolor{lightgray}\hline
s\_fabs.c & fabs(double) & no branch \\\arrayrulecolor{lightgray}\hline
s\_finite.c & finite(double) & no branch \\\arrayrulecolor{lightgray}\hline
s\_frexp.c & frexp(double,  int*) & unsupported input type \\\arrayrulecolor{lightgray}\hline
s\_isnan.c & isnan(double) & no branch \\\arrayrulecolor{lightgray}\hline
s\_ldexp.c & ldexp(double, int) & unsupported input type \\\arrayrulecolor{lightgray}\hline
s\_lib\_version.c & lib\_versioin(double) & no branch \\\arrayrulecolor{lightgray}\hline
s\_matherr.c & matherr(struct exception*) & unsupported input type \\\arrayrulecolor{lightgray}\hline
s\_scalbn.c & scalbn(double, int) & unsupported input type \\\arrayrulecolor{lightgray}\hline
s\_signgam.c & signgam(double) & no branch \\\arrayrulecolor{lightgray}\hline
s\_significand.c & significand(double) & no branch \\\arrayrulecolor{lightgray}\hline
w\_acos.c & acos(double) & no branch \\\arrayrulecolor{lightgray}\hline
w\_acosh.c & acosh(double) & no branch \\\arrayrulecolor{lightgray}\hline
w\_asin.c & asin(double) & no branch \\\arrayrulecolor{lightgray}\hline
w\_atan2.c & atan2(double, double) & no branch \\\arrayrulecolor{lightgray}\hline
w\_atanh.c & atanh(double) & no branch \\\arrayrulecolor{lightgray}\hline
w\_cosh.c & cosh(double) & no branch \\\arrayrulecolor{lightgray}\hline
w\_exp.c & exp(double) & no branch \\\arrayrulecolor{lightgray}\hline
w\_fmod.c & fmod(double, double) & no branch \\\arrayrulecolor{lightgray}\hline
w\_gamma\_r.c & gamma\_r(double, int*) & no branch \\\arrayrulecolor{lightgray}\hline
w\_gamma.c & gamma(double, int*) & no branch \\\arrayrulecolor{lightgray}\hline
w\_hypot.c & hypot(double, double) & no branch \\\arrayrulecolor{lightgray}\hline
w\_j0.c & j0(double) & no branch \\\arrayrulecolor{lightgray}\hline
 & y0(double) & no branch \\\arrayrulecolor{lightgray}\hline
w\_j1.c & j1(double) & no branch \\\arrayrulecolor{lightgray}\hline
 & y1(double) & no branch \\\arrayrulecolor{lightgray}\hline
w\_jn.c & jn(double) & no branch \\\arrayrulecolor{lightgray}\hline
 & yn(double) & no branch \\\arrayrulecolor{lightgray}\hline
w\_lgamma\_r.c & lgamma\_r(double, int*) & no branch \\\arrayrulecolor{lightgray}\hline
w\_lgamma.c & lgamma(double) & no branch \\\arrayrulecolor{lightgray}\hline
w\_log.c & log(double) & no branch \\\arrayrulecolor{lightgray}\hline
w\_log10.c & log10(double) & no branch \\\arrayrulecolor{lightgray}\hline
w\_pow.c & pow(double, double) & no branch \\\arrayrulecolor{lightgray}\hline
w\_remainder.c & remainder(double, double) & no branch \\\arrayrulecolor{lightgray}\hline
w\_scalb.c & scalb(double, double) & no branch \\\arrayrulecolor{lightgray}\hline
w\_sinh.c & sinh(double) & no branch \\\arrayrulecolor{lightgray}\hline
w\_sqrt.c & sqrt(double) & no branch \\\arrayrulecolor{black}\bottomrule
\end{tabular}
\label{tab:appendix:untested}
\end{table*}

\section{Implementation Details}
\label{sect:implem}










\begin{figure}
\centering
\includegraphics[width=1.0\linewidth]{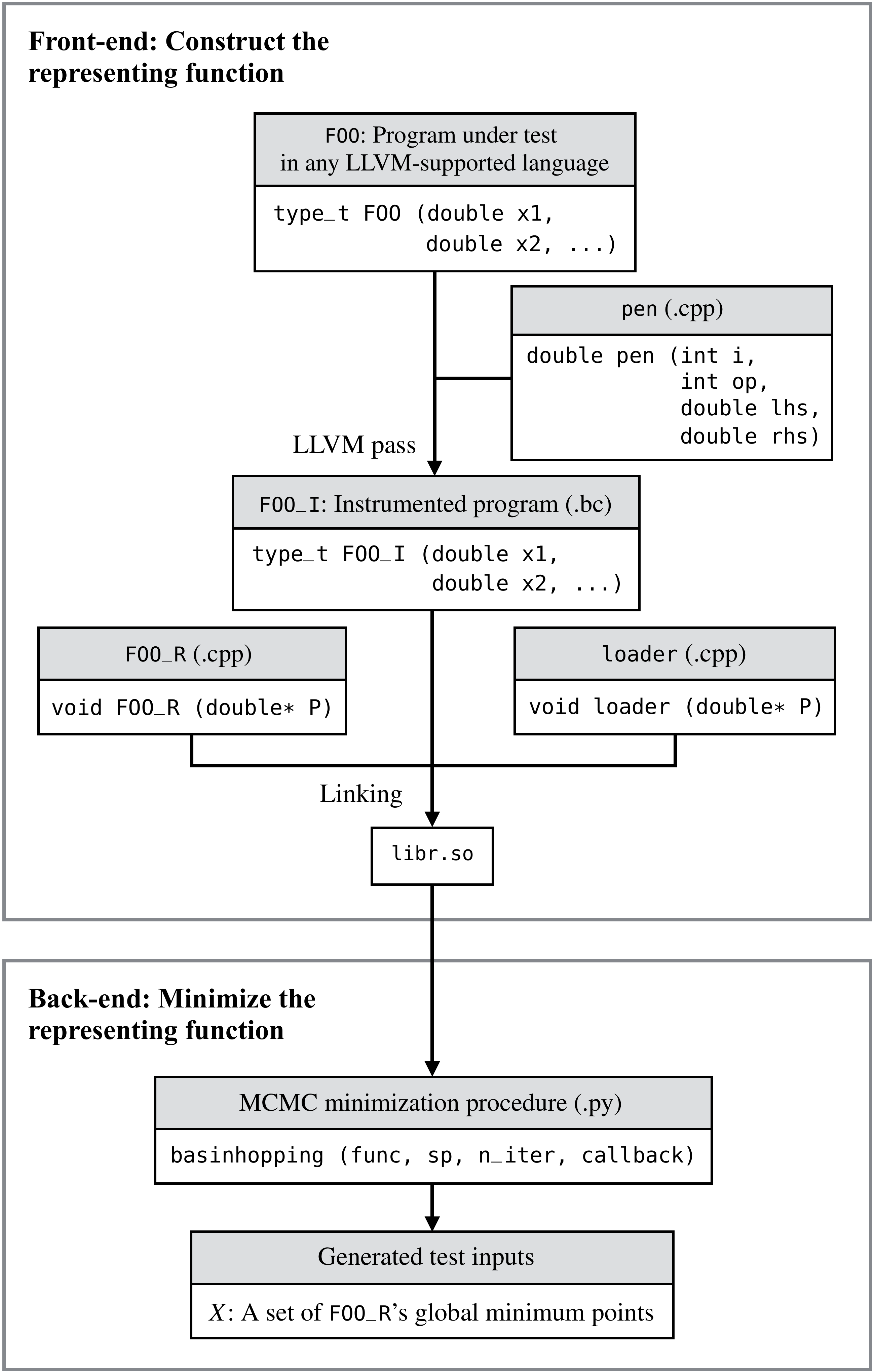}
\caption{\coverme Implementation.}
\label{fig:implem:overview}
\end{figure}

As a proof-of-concept demonstration, we have implemented
Algo.~\ref{theory:algo:coverme} in the tool \coverme.  This section
presents the implementation and technical details omitted from the main body of the paper.





\subsection{Frontend of CoverMe}
The frontend implements Step 1 and Step 2 of Algo.~\ref{theory:algo:coverme}. 
\coverme compiles the program under test \FOO to LLVM IR with Clang~\cite{clang:web}. Then it uses an LLVM pass~\cite{llvmpass:web} to inject assignments in \FOO. The program under test can be in any LLVM-supported language, \eg, Ada, the C/C++ language family, or Julia. Our current implementation accepts C code only. 


Fig.~\ref{fig:implem:overview} illustrates \FOO as a function of signature \texttt{type\_t
  FOO (type\_t1 x1, type\_t2 x2, \ldots)}.  The return type (output) of the function,
\texttt{type\_t}, can be any kind of types supported by C, whereas the types of the input
parameters, {\texttt{type\_t1, type\_t2, \ldots}}, are restricted to
{\tt double} or {\tt
  double*}. 
We have explained the signature of $\pen$ in Def.~\ref{def:overview:pen}.
Note that \coverme does not inject $\pen$ itself into \FOO, but instead injects
assignments that invoke $\pen$.  We implement $\pen$ in a separate C++
file.



The frontend also links \FOOI and \FOOR with  a simple program  \loader to generate a shared object file \librso, which is the 
outcome of the frontend. It stores the
representing function in the form of a shared object file (.so file). 

 
\subsection{Backend of CoverMe}

The backend implements Step 3 of Algo.~\ref{theory:algo:coverme}. 
It invokes the representing function via \librso.  
The kernel of the backend is an external MCMC engine.
It uses the off-the-shelf implementation known as \emph{Basinhopping} from the Scipy Optimization package~\cite{scipy:web}. Basinhopping takes a range of input parameters. Fig.~\ref{fig:implem:overview} shows the important ones for our implementation {\tt
  basinhopping(f, sp, n\_iter, call\_back)}, where {\tt f} refers to the
representing function  from \librso, {\tt sp} is a starting point as a 
Python {\tt Numpy} array, {\tt n\_iter} is the iteration number used in Algo.~\ref{theory:algo:coverme} and {\tt call\_back} is a
client-defined procedure. Basinhopping invokes {\tt call\_back}  at the end of each
iteration (Algo.~\ref{theory:algo:coverme}, Lines~24-34). The call\_back procedure
allows \coverme to terminate if it saturates all branches. In this
way, \coverme does not need to wait until passing all $\nStart$
iterations (Algo.~\ref{theory:algo:coverme}, Lines~8-12).



\subsection{Technical Details}
Sect.~\ref{sect:overview} assumes Def.~\ref{def:overview:branchcoverage}(a-c) for the sake of simplification.   This section discusses how   \coverme relaxes the assumptions when  handling real-world floating-point code.  We also show how \coverme handles function calls at the end of this section. 

\Paragraph{Dealing with Pointers  (Relaxing Def.~\ref{def:overview:branchcoverage}(a))} 
We consider only pointers to floating-point numbers. They may occur  (1) in an input parameter, (2) in a conditional statement, or  (3) in the code body but not in the conditional statement. 

\coverme inherently handles case (3) because it is execution-based and does not need to analyze pointers and their effects.  \coverme currently does not handle case (2) and  simply ignores these conditional statements by not injecting $\pen$ before them.  

Below we explain how \coverme deals with case (1). A branch
coverage testing problem for a program whose inputs are pointers to
doubles, can be regarded as the same problem with a simplified program
under test. For instance, finding test inputs to cover branches of
program {\tt void FOO(double* p) \{if (*p <= 1)... \} } can be reduced
to testing the program {\tt void FOO\_with\_no\_pointer (double x)
  \{if (x <= 1)... \}}. \coverme transforms program \FOO to {\tt FOO\_with\_no\_pointer}
if a   \FOO's input parameter is a floating-point pointer.   




\Paragraph{Dealing with  Comparison between Non-floating-point Expressions (Relaxing Def.~\ref{def:overview:branchcoverage}(b)).} 

We have encountered situations where a  conditional statement invokes a comparison between non floating-point numbers.  \coverme handles these situations by first promoting the non floating-point numbers to floating-point numbers and then injecting $\pen$ as described in Algo.~\ref{theory:algo:coverme}.  For example, before a conditional statement like {\tt if (xi op yi)} where {\tt xi} and {\tt yi} are integers, \coverme injects  {\tt r = pen (i, op, (double) x, (double) y));}. 
Note that such an approach does not allow us to handle  data types that are incompatible with floating-point types, \eg, conditions like {\tt if (p != Null)}, which \coverme has to ignore.

\Paragraph{Dealing with Infeasible Branches (Relaxing Def.~\ref{def:overview:branchcoverage}(c)).}
Infeasible branches are branches that cannot be covered by any input.
Attempts to cover infeasible branches are useless and time-consuming. 

Detecting infeasible branches is a difficult problem in general.
\coverme uses a simple heuristic to detect and ignore infeasible
branches.  When CoverMe finds a minimum that is not zero, that is,
$\FOOR(x^*) > 0$, CoverMe deems the unvisited branch of the last
conditional to be infeasible and adds it to $\Explored$, the set
of unvisited and deemed-to-be infeasible branches.

 Imagine that we modify $l_1$ of the program \FOO in
Fig.~\ref{fig:algo:injecting_pen} to  the conditional statement {\tt
  if (y == -1)}. Then the branch $1_T$ becomes infeasible. We rewrite this
modified program below and  illustrate how we deal with infeasible branches. 
\begin{lstlisting}
l0: if (x <= 1) {x++}; 
    y = square(x); 
l1: if (y == -1) {...}
\end{lstlisting}
where we omit the concrete implementation of {\tt square}. 

Let \FOOR denote the representing function constructed for the
program. In the minimization process, whenever \coverme obtains $x^*$
such that $\FOOR(x^*) > 0$, \coverme selects a branch that it regards
infeasible. \coverme selects the branch as follows: Suppose $x^*$
exercises a path $\tau$ whose last conditional statement is denoted by
$l_z$, and, without loss of generality, suppose $z_T$ is passed
through by $\tau$, then \coverme regards $z_F$ as an infeasible
branch.

In the modified program above, if $1_F$ has been saturated, the
representing function evaluates to $(y+1)^2$ or $(y+1)^2+1$, where $y$
equals to the non-negative {\tt square(x)}. Thus, the minimum point
$x^*$ must satisfy $\FOOR(x^*)>0$ and its triggered path ends with
branch $1_F$.  \coverme then regards $1_T$ as an infeasible branch.

\coverme then regards the infeasible branches as already saturated. It
means, in line 12 of Algo.~\ref{theory:algo:coverme}, \coverme updates
$\Explored$ with saturated branches and infeasible branches (more
precisely, branches that \coverme regards infeasible).

The presented heuristic works well in practice (See
Sect.~\ref{sect:eval}), but we do not claim that our heuristic always
correctly detects infeasible branches.


\Paragraph{Dealing with Function Calls.}
By default, \coverme injects $\myr = \pen_i$ only in the entry
function to test. If the entry function invokes other external
functions, they will not be transformed. For example, in the program
\FOO of Fig.~\ref{fig:algo:injecting_pen}, we do not transform {\tt
  square(x)}. In this way, \coverme only attempts to saturate all
branches for a single function at a time.

However, \coverme can also easily handle functions invoked by its entry
function. As a simple example, consider:
\begin{lstlisting}
void FOO(double x) {GOO(x);}
void GOO(double x) {if (sin(x) <= 0.99) ... }
\end{lstlisting}

If \coverme aims to saturate {\tt FOO} and {\tt GOO} but not {\tt sin},
and  it sets \FOO as the entry function, then it instruments
both {\tt FOO} and {\tt GOO}. Only {\tt GOO} has a
conditional statement, and \coverme injects an assignment on {\tt
  r} in {\tt GOO}.

\end{document}
